\renewcommand{\cite}{\citep}
\newcolumntype{L}[1]{>{\raggedright\let\newline\\\arraybackslash\hspace{0pt}}m{#1}}
\newcolumntype{C}[1]{>{\centering\let\newline\\\arraybackslash\hspace{0pt}}m{#1}}
\newcolumntype{R}[1]{>{\raggedleft\let\newline\\\arraybackslash\hspace{0pt}}m{#1}}
\DeclareMathOperator*{\argmax}{argmax}
\newcommand{\mN}{\mathcal{N}}
\renewcommand{\hat}{\widehat}
\renewcommand{\breve}{\tilde}%\ddot}
\newtheorem{theorem}{Proposition}
\newtheorem{definition}{Definition}
\newtheorem{lemma}{Lemma}
\newtheorem{fact}{Fact}
\newcommand\blfootnote[1]{%
	\begingroup
	\renewcommand\thefootnote{}\footnote{#1}%
	\addtocounter{footnote}{-1}%
	\endgroup
}
\begin{document}
	
\title{Distributed Mechanism Design with Learning Guarantees}
\author{Abhinav Sinha and Achilleas Anastasopoulos\\
	\normalsize{EECS Department, University of Michigan, Ann Arbor.} \\
	\normalsize{\texttt{\{absi,anastas\}@umich.edu}}
	\blfootnote{This work is supported in part by NSF grant ECCS-1608361.}
}

\date{\normalsize\today} 
\maketitle
	
\begin{abstract}
Mechanism design for fully strategic agents commonly assumes broadcast nature of communication between agents of the system. Moreover, for mechanism design, the stability of Nash equilibrium (NE) is demonstrated by showing convergence of specific pre-designed learning dynamics, rather than for a class of learning dynamics. In this paper we consider two common resource allocation problems: sharing $ K $ infinitely divisible resources among strategic agents for their private consumption (private goods), and determining the level for an infinitely divisible public good with $ P $ features, that is shared between strategic agents. For both cases, we present a distributed mechanism for a set of agents who communicate through a given network. In a distributed mechanism, agents' messages are not broadcast to all other agents as in the standard mechanism design framework, but are exchanged only in the local neighborhood of each agent. The presented mechanisms produce a unique NE and fully implement the social welfare maximizing allocation. In addition, the mechanisms are budget-balanced at NE. It is  also shown that the mechanisms induce a game with contractive best-response, leading to guaranteed convergence for all learning dynamics within the Adaptive Best-Response dynamics class, including dynamics such as Cournot best-response, $ k- $period best-response and Fictitious Play. We also present a numerically study of convergence under repeated play, for various communication graphs and learning dynamics. 
\end{abstract}

\section{Introduction}

Mechanism design has been studied extensively in Economics~\cite{hurwicz2006,vohra2011mechanism,borgers2015book} and Engineering literature~\cite{kelly,basar,hajekvcg,johari2009efficiency,rahuljain,demos,bhattacharya2016,SiAn_multicast_tcns}. Most of the mechanisms presented in literature for the case of strategic agents have two (unrelated) drawbacks that are an impediment to  their applicability in real-world scenarios. The first drawback is the assumption that the underlying communication structure between agents is a broadcast one. Almost all mechanisms in the strategic setting define allocation (and taxes/subsidies) in such a way that it requires each agent to broadcast their message, i.e., the allocation function is a function of all users' messages.  The second drawback is the dynamic stability of Nash equilibrium (NE) of the game induced by the mechanism. Results on this, for most of the presented mechanisms in the literature, are either non-existent or restricted to narrow definition(s) of dynamic stability. As a result it is not clear if, and under what conditions on the dynamics, the NE is reached.
	
Regarding the first issue mentioned above, until now little consideration has been given to designing incentives for strategic agents for whom communication is restricted by a network. The motivation for this modeling consideration comes from the literature on distributed optimization,~\cite{nedic2009,boyd2011,duchi2012,scutari2016}, where algorithms are designed for global consensus between distributed non-strategic agents, who possess local information and who communicate locally on a network. In mechanism design, a designer designs incentives such that strategic agents ``agree'' to reveal their relevant private (local) information truthfully. Thus, with the above motivation in mind, a natural question is to ask: can incentives be designed for strategic agents with local private information who communicate locally on a network? One expects that attaining consensus between strategic agents only becomes harder to achieve when message exchange is restricted by the network structure.
This aforementioned issue is not to be confused with that of distributed optimization where the problem of local exchange of information has been addressed and to a large extend solved~\cite{nedic2009,boyd2011,duchi2012,scutari2016}. Neither should it be confused with the local public goods models (e.g.,~\cite{local2003,demosshruti}), where each agent's utility in the model is already assumed to only depend on his/her neighbors' allocations.
	
Regarding the second drawback mentioned above, there is a long line of work investigating stability of NE through learning in games~\cite{milgrom1990,young2004strategic,fudenberglearning}. Theoretically, the notion of NE applies to perfect information settings, i.e., where each agent knows the utility of every other agent. However, for most of the mechanism design works in the literature where NE is used as the solution concept, the models are not necessarily restricted to the perfect information setting. Indeed, in an informationally and physically decentralized system it is natural to assume that agents only know their own utility and no one else's. In such cases, the robustness w.r.t. information available to agents, of any particular designed mechanism is evaluated by the learning guarantees that it can provide. Since agents can't calculate the NE offline, they are expected to learn it by repeatedly playing the induced game whilst adjusting their strategy dynamically using the past observations. The larger the class of learning dynamics that are guaranteed to converge, the more robust the mechanism.  The idea of NE being the convergent point of learning dynamics directly relates to the \emph{Evolutive} interpretation of NE,~\cite{osborne1994}, where NE even for a single-shot game  is interpreted as the stationary point of a dynamic adjustment process. The original thesis of John Nash,~\cite{nashthesis51}, too provides a similar dynamic adjustment interpretation of NE.

In this paper our objective is to design mechanisms that resolve simultaneously both the issues mentioned above. This means that the mechanisms are distributed, i.e., the allocation and tax functions (contracts) obey the communication constraints of a network and for the designed mechanisms theoretical guarantees of convergence, for a sufficiently large class of learning dynamics, can be provided.
	
The basic idea for achieving the latter this is to identify appropriate properties of games that can lead to convergence of a correspondingly selected class of learning dynamics and then design the mechanism such that the induced game possesses the identified properties. In Milgrom and Roberts~\cite{milgrom1990}, authors identify \emph{supermodularity} as a critical property of a game and prove that any learning dynamic within the \emph{Adaptive Dynamics} class is guaranteed to converge between the two most extreme Nash equilibria when the game is played repeatedly. Following this,~\cite{chen2002family} presents a mechanism for the Lindahl allocation problem such that the induced game is supermodular. Healy and Mathevet in~\cite{healy2012designing} identify contraction as a learning-relevant  property of the game and show that any learning dynamic within the \emph{Adaptive Best-Response (ABR)} dynamics class is guaranteed to converge to the unique NE. They also present a mechanism that induces a contractive game for the Walrasian and Lindahl allocation problems, under the usual broadcast information structure. The property of contraction is more stringent than supermodularity and consequently the ABR class is broader than the Adaptive Dynamics class.
	
In this paper, for both Walrasian and Lindahl allocation problems, we define a mechanism through an appropriately designed message space, allocation function and tax function. The allocation and tax functions, for any agent, depend only on the messages of his/her neighbors. The mechanism description contains certain free parameters such that for all values of these parameters the mechanism achieves its goal of full implementation in NE. For the purpose of providing learning guarantees we consider the ABR class of learning dynamics and show that by tuning the free parameters appropriately, the induced game can be made contractive.  Section~\ref{seclearning} describes some well-known learning dynamics that are part of the ABR class. With the aid of numerical analysis, we show that the designed contractive mechanism provides exponential rate of convergence for several instances of our model.
	
The structure of this paper is as follows: Section~\ref{secmodel} describes the two centralized allocation problems and their optimality conditions. Section~\ref{secpr} defines some mechanism design basics and then presents the mechanism for the private goods problem. Section~\ref{secmechpub} presents the mechanism for the public goods problem. Section~\ref{seclearning} introduces learning-related properties and contains the result of guaranteed convergence of any learning dynamic within the ABR class. Finally, Section~\ref{seclearning} also contains a numerical study of the convergence pattern of various learning dynamics for different underlying communication graphs.

\section{Model} \label{secmodel}
	
There are $ N $ strategic agents, denoted by the set $ \mN = \{1,\ldots,N\} $. A directed communication graph $ \mathcal{G} = (\mN,\mathcal{E}) $ is given, where the vertexes correspond to the agents and an edge from vertex $ i $ to $ j $ indicates that agent $ i $ can ``listen'' to agent $ j $. It is assumed that the given graph $ \mathcal{G} $ is strongly connected. In this paper, we are interested in two different types of allocation problems: private goods and public goods, we describe each model below. In Economics literature, these are also known as Walrasian and Lindahl allocation problems~\cite{hurwicz1979outcome,MWG}.
	
\subsection{Private goods allocation problem}
	
There are $ K $ infinitely divisible goods, denoted by set $ \mathcal{K} = \{1,\ldots,K\} $, that are to be distributed among the agents. Each agent receives a utility $ v_i(x_i) $ based on the profile $ x_i =  (x_i^1,\ldots,x_i^K) $ of quantity of each good that he/she receives. Since for each agent, its utility depends only on privately consumed allocation $ x_i $ and not on other agents' allocation, this is the private goods model. 
	
It is assumed that $ v_i: \mathbb{R}^K \rightarrow \mathbb{R} $ is a continuously double-differentiable, strictly concave function that satisfies, $ \forall~k \in \mathcal{K} $, 
\begin{subequations} \label{eqetagenpr}
\begin{gather}
	-\eta < H_{kk}^{-1} + \sum_{l \in \mathcal{K},\,l\ne k} \left\vert H^{-1}_{kl} \right\vert  < 0,
	\\
	H_{kk}^{-1} < - \frac{1}{\eta},
\end{gather}
\end{subequations}
for any given $ \eta > 1 $, where $ H^{-1} $ is the inverse of the Hessian $ H = \left[ \left. (\partial^2 v_i(\cdot)) \middle/ (\partial x_i^k \partial x_i^l) \right. \right]_{k,l} $. To understand the significance of this assumption consider the case of $ K=1 $, then this condition is the same as
\begin{equation}\label{eqetapr}
	v_i^{\prime\prime}(\cdot) \in \left( -\eta,-\frac{1}{\eta} \right). 
\end{equation}
It is already assumed that $ v_i(\cdot) $ is strictly concave, the only additional imposition made by this assumption is that the second derivative of $ v_i(\cdot) $ is strictly bounded away from $ 0 $ and $ -\infty $. More generally if the utility is separable, $ v_i(x_i) = \sum_{k \in \mathcal{K}} v_{i,k}(x_i^k) $, then the condition in~\eqref{eqetagenpr} is the same as
\begin{equation}
	v_{i,k}^{\prime\prime}(\cdot) \in \left( -\eta,-\frac{1}{\eta} \right), \quad \forall~k \in \mathcal{K}.
\end{equation}
The above mentioned properties of the utility function are assumed to be common knowledge between agents and the designer. However, the utility function $ v_i(\cdot) $ itself is known only to agent $ i $ and is not known to other agents or the designer. The designer wishes to allocate available goods such that the sum of utilities is maximized subject to availability constraints, i.e., to solve the following centralized allocation problem,
\begin{subequations} \label{eqcppr} 
\begin{gather}
	x^* = \argmax_{x \in \mathbb{R}^K} \sum_{i \in \mN} v_i(x_i) 
	\\
	\label{eqconstpr}
	\text{subject to} \quad \sum_{i \in \mN} x_i^k = c_k, \quad \forall~k \in \mathcal{K},
\end{gather}
\end{subequations}
where $ c_k \in \mathbb{R} $ is the total available amount of good $ k \in \mathcal{K} $. The allocation $ x^* $ is also called the efficient allocation and it is assumed to be finite i.e., the optimization is well-defined. The efficient allocation is unique since the utilities are strictly concave. Further, the necessary and sufficient condition for optimality are
\begin{subequations}\label{eqKKTpr}
\begin{alignat}{2}
	\frac{\partial v_i(x^*)}{\partial x_i^k} &= \lambda_k^*, \quad &&\forall~k \in \mathcal{K},~\forall~i \in \mN,
	\\
	\sum_{i \in \mN} {x_i^k}^* &= c_k, \quad &&\forall~k \in \mathcal{K}, 
\end{alignat}
\end{subequations} 
where $ \big( \lambda_k^* \big)_{k \in \mathcal{K}} \in \mathbb{R}^K $ are the (unique) optimal dual variables for each constraint in~\eqref{eqconstpr}.

\subsection{Public goods allocation problem}

There is a single infinitely divisible public good with $ P $ features, with the set of features denoted by $ \mathcal{P} = \{1,\ldots,P\} $. Each agent receives a utility $ v_i(x) $ based on the quantity of the public good $ x = \left(x^p\right)_{p \in \mathcal{P}} \in \mathbb{R}^P $. Since for each agent, its utility depends on the common allocation $ x $, this is the public goods model. It is assumed that $ v_i: \mathbb{R}^P \rightarrow \mathbb{R} $ is a continuously double-differentiable, strictly concave function that satisfies~\eqref{eqetagenpr} with the Hessian $ H = \left[ \left. (\partial^2 v_i(\cdot)) \middle/ (\partial x^p \partial x^q) \right. \right]_{p,q}  $. If $ P=1 $, then this condition is the same as in~\eqref{eqetapr}.
	%\begin{equation}\label{eqetapub}
	%v_i^{\prime\prime}(\cdot) \in \left( -\eta,-\frac{1}{\eta} \right).
	%\end{equation} 
As in the private goods model, the properties of the utility function are assumed to be common knowledge between agents and the designer. However, the utility function $ v_i(\cdot) $ itself is known only to agent $ i $ and is not known to other agents or the designer. The designer wishes to allocate the public good such that the sum of utilities is maximized, i.e., to solve the following centralized allocation problem,
\begin{align}\label{eqcppub}
	x^* = \argmax_{x \in \mathbb{R}^P} \sum_{i \in \mN} v_i(x). 
\end{align}
The allocation $ x^* $ is also called the efficient allocation and it is assumed to be finite i.e., the optimization is well-defined. The efficient allocation is unique due to strictly concave utilities. The necessary and sufficient optimality conditions can be written as
\begin{subequations}\label{eqKKTpub}
\begin{alignat}{2}
	\frac{\partial v_i(x^*)}{\partial x^p} &= {\mu_i^p}^*, \quad &&\forall~ p \in\mathcal{P},~\forall~i \in \mN,
	\\ 
	\label{eqKKTpub_b}
	\sum_{i \in \mN} {\mu_i^p}^* &= 0, \quad &&\forall~ p \in \mathcal{P},
\end{alignat}
\end{subequations} 
where $ \big( {\mu_i^p}^* \big)_{p \in  \mathcal{P}, i \in \mN} $ are the (unique) optimal dual variables.
	
In general, for the public goods problem one can also assume a seller in the system who produces the quantity $ x $ and for whom the cost of production is a known (convex) function. In this case, the social welfare maximizing allocation contains the utility of the seller as well. For keeping exposition clear and to focus on the similarities between the public and private goods problems, the seller is not considered in this model. If needed, this can accommodated in a straightforward manner.
	
In the next two sections, we present a mechanism for each of the above problems.

\section{A mechanism for the private goods problem} \label{secpr}
	
\subsection{Definitions}
A one-shot mechanism is defined by the triplet, 
\begin{equation}
	\Big( \mathcal{M} = \mathcal{M}_1 \times \cdots \times \mathcal{M}_N,\big(\widehat{x}_1(\cdot),\ldots,\widehat{x}_N(\cdot)\big), \big( \widehat{t}_1(\cdot),\ldots,\widehat{t}_N(\cdot)\big) \Big)
\end{equation}
which consists of, for each agent $ i \in \mN $, the message space $ \mathcal{M}_i $, the allocation function $ \widehat{x}_i:\mathcal{M} \rightarrow \mathbb{R}^K $ and the tax function $ \widehat{t}_i:\mathcal{M} \rightarrow \mathbb{R} $. Given a mechanism, a game $ \mathfrak{G}_{pvt} $ is setup between the agents in $ \mN $, with action space $ \mathcal{M} $ and utilities 
\begin{equation}\label{equtil} 
	u_i(m) = v_i(\widehat{x}_i(m)) - \widehat{t}_i(m).
\end{equation}
	
For this game, $ \widetilde{m} \in \mathcal{M} $ is a Nash equilibrium if 
\begin{equation}
	u_i(\widetilde{m}) \ge u_i(m_i,\widetilde{m}_{-i}), \quad \forall~m_i \in \mathcal{M}_i,~\forall~i \in \mN.
\end{equation}
The mechanism is said to \emph{fully implement} the efficient allocation if 
\begin{equation}
	\Big( \widehat{x}_1(\widetilde{m}),\ldots,\widehat{x}_N(\widetilde{m}) \Big) = x^*, \quad \forall~\widetilde{m} \in \mathcal{M}_{NE},
\end{equation} 
where $ \mathcal{M}_{NE} \subseteq \mathcal{M} $ is the set of all Nash equilibria of game $ \mathfrak{G}_{pvt} $ and $ x^* $ is the efficient allocation from~\eqref{eqcppr}. Furthermore, the mechanism is said to be \emph{budget balanced} at Nash equilibrium if
\begin{equation}
	\sum_{i\in\mN}\widehat{t}_i(\widetilde{m}) = 0, \quad \forall~\widetilde{m} \in \mathcal{M}_{NE}.
\end{equation}
Finally, we call the mechanism \emph{distributed} if for any agent $ i \in \mN $, the allocation function $ \widehat{x}_i(\cdot) $ and tax function $ \widehat{t}_i(\cdot) $ instead of depending on the entire message $ m = (m_j)_{j \in \mN} $, depend only on $ m_i $ and $ \big(m_j\big)_{j \in \mN(i)} $, i.e., agent $ i $ and his/her immediate neighbors. Here $ \mN(i) $ are all the ``out''-neighbors of $ i $ i.e., there exists an edge from $ i $ to $ j $ in graph $ \mathcal{G} $ iff $ j \in \mN(i) $ . 
	
\subsection{Mechanism}
\label{secmechpr}
	
For the purpose of maintaining clarity in the exposition, the mechanism presented below is for the special case of a single private good i.e., $ K=1 $. A natural extension of the presented mechanism to the general case is discussed at the end of this section.
	
For any agent $ i\in \mN $, the message space is $ \mathcal{M}_i = \mathbb{R}^{N+1} $. The message $ m_i = (y_i,q_i) $ consists of agent $ i $'s demand $ y_i \in \mathbb{R} $ for the allocation of the single good and a surrogate/proxy $ q_i = \left( q_i^1,\ldots,q_i^N \right) \in \mathbb{R}^N $ for the demand of all the agents (including himself/herself). 
	
\begin{figure}
	\centering 
	\begin{tikzpicture}
		\node[circle,draw,line width=1pt] (i) at (0,0) {$ i $} ;
		\node[circle,draw] (nij) at (1,1) {$ a $} ;
		\node[above] (nijtext) at (-2,1.3) {$ n(i,j) = a $} ; 
		\node[above] (dijtext) at (-2,0.75) {$ d(i,j) = 3 $} ; 	
		\node[circle,draw] (b) at (2.5,1.5) {$ b $} ;
		\node[circle,draw,line width=1pt] (j) at (4,2) {$ j $} ;
		\node[circle,draw] (p) at (1.5,-1) {$ p $} ;
		\node[circle,draw] (q) at (-1,-1) {$ q $} ;
		\draw[-{Latex[width=1mm]}] (p) edge (i) ;
		\draw[-{Latex[width=1mm]}] (i) edge (q) ;
		\draw[-{Latex[width=1mm]}] (i) edge (nij) (nij) edge (b) (b) edge (j) ;
		\draw[dotted,line width=0.75pt] (p) -- (2.5,-1.5) ;		
		\draw[dotted,line width=0.75pt] (q) -- (-2,-1.5) ;
		\draw[dotted,line width=0.75pt] (nij) -- (2,0) ;
		\draw[dotted,line width=0.75pt] (b) -- (3,0.5) ;
		\draw[dotted,line width=0.75pt] (5,3) -- (j) -- (5,1) ;					
	\end{tikzpicture}
	\caption{$ n(i,j) $ and $ d(i,j) $ for the strongly connected directed graph $ \mathcal{G} $.}
	\label{figGraph} 
\end{figure}
	
Since the underlying graph $ \mathcal{G} = \left( \mathcal{N},\mathcal{E} \right) $ is strongly connected, for any pair of vertexes $ i,j \in \mN $, the following two quantities are well-defined. $ d(i,j) $ is the length of the shortest path from $ i $ to $ j $ and $ n(i,j) \in \mN(i) $ is the out-neighbor of $ i $ such that the shortest path from $ i $ to $ j $ goes through $ n(i,j) $. The two quantities are depicted in Fig.~\ref{figGraph}. The allocation function is defined as
\begin{equation}\label{eqallopr}
	\widehat{x}_i(m) = y_i - \frac{1}{N-1} \sum_{r \in \mN(i)} \frac{q_{r}^r}{\xi} - \frac{1}{N-1} \sum_{\substack{r \notin \mN(i) \\ r \ne i} } \frac{q_{n(i,r)}^r}{\xi^{d(i,r)-1}} + \frac{c_1}{N}, \quad \forall~i \in \mN,
\end{equation}
where $ \xi \in (0,1) $ is an appropriately chosen \emph{contraction} parameter and its  selection is discussed in Section~\ref{seclearning} on Learning Guarantees, proof of Proposition~\ref{thmcontracpr}. The tax function is defined as
\begin{subequations}\label{eqtaxpr}
\begin{align}
	\label{eqtaxpr_a}
	\widehat{t}_i(m) &= \widehat{p}_i(m_{-i}) \left( \widehat{x}_i(m) - \frac{c_1}{N} \right) +  \left(  q_i^i - \xi y_i \right)^2 
	+ \sum_{\substack{r \in \mathcal{N}(i)}}  \left( q_i^r - \xi y_r \right)^2 
	+ \sum_{\substack{r \notin \mathcal{N}(i)\\r \ne i}}  \left(q_i^r - \xi q_{n(i,r)}^r \right)^2, 
	\\
	\label{eqtaxpr_b}
	\widehat{p}_i(m_{-i}) &= \frac{1}{\delta} \left( \frac{q_{n(i,i)}^i}{\xi} + \sum_{r \in \mN(i)} \frac{q_{r}^r}{\xi} + \sum_{\substack{r \notin \mN(i) \\ r \ne i} } \frac{q_{n(i,r)}^r}{\xi^{d(i,r)-1}} \right), \quad \forall~i \in \mN,  
\end{align} 
\end{subequations}
where $ n(i,i) \in \mN(i) $ is an arbitrarily chosen neighbor of $ i $ and $ \delta > 0 $ is an appropriately chosen parameter. Both $ \xi,\delta $ are selected simultaneously in the proof of Proposition~\ref{thmcontracpr} from Section~\ref{seclearning}. 
	
The quantities, $ n(\cdot,\cdot) $ and $ d(\cdot,\cdot) $, are based on the graph $ \mathcal{G} $. The only property of relevance here is that the two are related recursively i.e., $ d(n(i,r),r) = d(i,r) - 1 $. Thus if a mechanism designer wishes to avoid calculating the shortest path (possibly due to the high complexity) then $ n,d $ can be replaced by any valid neighbor and distance mapping, respectively, as long as they are related recursively as above.

\subsection{Results}

\begin{fact}[Distributed]
	The mechanism defined in~\eqref{eqallopr} and~\eqref{eqtaxpr} is distributed.
\end{fact}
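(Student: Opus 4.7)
The plan is to verify the distributed property by direct inspection: for each quantity appearing in $\widehat{x}_i(m)$, $\widehat{t}_i(m)$, and the personalized price $\widehat{p}_i(m_{-i})$, I check that it is a coordinate of either $m_i$ or some $m_j$ with $j \in \mathcal{N}(i)$. The single structural fact that makes this work, and that I would state up front, is that for every $r \in \mN \setminus \{i\}$ with $r \notin \mN(i)$, the first hop on the shortest path from $i$ to $r$ satisfies $n(i,r) \in \mN(i)$; the arbitrarily chosen $n(i,i)$ is also in $\mN(i)$ by construction. Thus any symbol of the form $q_{n(i,r)}^{\cdot}$ is a coordinate of a neighbor's message.

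First I would dispatch $\widehat{x}_i(m)$ in~\eqref{eqallopr}: the term $y_i$ is a coordinate of $m_i$; the terms $q_r^r / \xi$ are indexed by $r \in \mN(i)$, hence drawn from $m_r$ for a neighbor $r$; and the terms $q_{n(i,r)}^r / \xi^{d(i,r)-1}$ are coordinates of $m_{n(i,r)}$ with $n(i,r) \in \mN(i)$ by the observation above. Consequently $\widehat{x}_i(\cdot)$ depends only on $m_i$ and $(m_j)_{j \in \mN(i)}$.

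Next I would handle $\widehat{t}_i(m)$ in~\eqref{eqtaxpr}. In the quadratic penalties, the symbols $q_i^i$ and $q_i^r$ are part of $m_i$, the symbols $y_r$ appear only for $r \in \mN(i)$, and the cross-terms $q_{n(i,r)}^r$ are again coordinates of messages from neighbors. The factor $\widehat{x}_i(m) - c_1/N$ has already been shown to be local. Finally, $\widehat{p}_i(m_{-i})$ in~\eqref{eqtaxpr_b} is a linear combination of $q_{n(i,i)}^i$, the $q_r^r$ for $r \in \mN(i)$, and the $q_{n(i,r)}^r$ for $r \notin \mN(i) \cup \{i\}$, each of which is a coordinate of some $m_j$ with $j \in \mN(i)$.

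Combining these observations yields that both $\widehat{x}_i(\cdot)$ and $\widehat{t}_i(\cdot)$ are functions only of $m_i$ and $(m_j)_{j \in \mN(i)}$, which is exactly the definition of a distributed mechanism. There is no substantive obstacle here; the content is entirely bookkeeping, and the only ingredient beyond inspection is the recursive identity $d(n(i,r),r) = d(i,r)-1$, which guarantees that the mapping $r \mapsto n(i,r)$ routes all ``far'' indices through an immediate out-neighbor of $i$.
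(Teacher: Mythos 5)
Your proof is correct and follows the same route as the paper, which simply observes that every term in~\eqref{eqallopr} and~\eqref{eqtaxpr} is a coordinate of $m_i$ or of $m_j$ for some $j \in \mN(i)$ (since $n(i,r) \in \mN(i)$ by definition); your version just spells out the bookkeeping term by term. The only quibble is that the recursive identity $d(n(i,r),r)=d(i,r)-1$ is not actually needed for distributedness --- membership $n(i,r)\in\mN(i)$ alone suffices --- but this does not affect correctness.
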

The distributed-ness of the mechanism follows from the fact that the expressions in~\eqref{eqallopr} and~\eqref{eqtaxpr} depend only on $ m_i $ and $ \big(m_r\big)_{r \in \mN(i)} $.
	
Since agents are connected through a given graph, they can only communicate with a restricted set of agents i.e., their neighbors. Yet, as indicated in the optimality conditions,~\eqref{eqKKTpr}, there needs to be two kinds of global consensus at the efficient allocation. Firstly, the total allocation to all the agents must equal the total available amount, $ \sum_{i \in \mN} x_i^* = c_1 $. Secondly, agents also need to agree on a common ``price'', $ \lambda_1^* $. To facilitate this, the message space consists of the surrogate variables $ q_i = (q_i^1,\ldots,q_i^N) $ which are known locally to agent $ i $ and are expected at equilibrium to be representative of the global demand $ y = (y_1,\ldots,y_N) $. Specifically, the second, third and fourth terms in the tax,~\eqref{eqtaxpr_a}, are designed for incentivizing agents to achieve the aforementioned duplication of global demand $ y $ to the local surrogate $ q_i $. 
	
To motivate the choice of the allocation function and the remaining part of the tax function consider the case of $ \xi = 1 $ and take into account the duplication, i.e., $ q_r = y $, $ \forall \: r \in \mN $. Since $ \xi = 1 $, all the factors involving $ \xi $ become $ 1 $ for the expressions in~\eqref{eqallopr} and~\eqref{eqtaxpr}. We design the allocation $ \widehat{x}_i(m) $ as a function of $ y_i $ and $ \left( q_r \right)_{r \in \mN(i)} $ such that after taking into account the duplication it becomes $ y_i - \frac{1}{N-1} \sum_{j \ne i} y_j + \frac{c_1}{N} $. This facilitates the first global consensus, $ \sum_{i \in \mN} x_i = c_1 $. One standard design principle for mechanisms is that if an agent partially controls their own allocation (such as here, since $ \widehat{x}_i(m) $ depends on $ y_i $) then they shouldn't be able to control the price. This is the reason that $ \widehat{p}_i(\cdot) $ doesn't depend on $ m_i $. It is function of $ \left( q_r \right)_{r \in \mN(i)} $ and is designed such that after taking into account the duplication, the price for any agent is proportional to $ \sum_j y_j $. This facilitates the second consensus - common price for all agents.

Finally, we set $ \xi < 1 $ and adjust the allocation and tax function accordingly so that the game $ \mathfrak{G}_{pvt} $ can be contractive (see Section~\ref{seclearning}). 
	
Before proceeding to the main results, define the best-response of any agent $ i $,
\begin{equation}\label{eqBRdefpr}
	\beta_i(m_{-i}) = \big( {\breve{y}}_i(m_{-i}),\breve{q}_i(m_{-i}) \big) \triangleq  \argmax_{m_i \in \mathcal{M}_i} \, u_i(m).
\end{equation}
Denote $ \beta = \left(\beta_1,\ldots,\beta_N\right) $. Best-response $ \beta: \mathcal{M} \rightarrow \mathcal{M} $ is a set-valued function in general.
	
\begin{lemma}[Concavity]\label{thmconcpr}
	For any agent $ i \in \mN $ and $ m_{-i} \in \mathcal{M}_{-i} $, the utility $ u_i(m) $, defined in~\eqref{equtil}, for the game $ \mathfrak{G}_{pvt} $ is strictly concave in $ m_i=(y_i,q_i) $. Thus, the best-response of agent $ i $ is unique and is defined by the first order conditions.	
\end{lemma}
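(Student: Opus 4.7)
The plan is to compute the Hessian of $u_i$ with respect to $m_i=(y_i,q_i^1,\ldots,q_i^N)$ and show it is negative definite; uniqueness of the best-response and its first-order characterization then follow immediately from standard strict concavity arguments on the unconstrained domain $\mathcal{M}_i=\mathbb{R}^{N+1}$.

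First I would isolate the dependence on $m_i$. Inspection of~\eqref{eqallopr} shows that $\widehat{x}_i(m)$ depends on $m_i$ only through $y_i$, and affinely so with coefficient $1$, since the two sums in~\eqref{eqallopr} involve $q_r^r$ for $r\in\mN(i)$ and $q_{n(i,r)}^r$ for $r\notin\mN(i)\cup\{i\}$, none of which are components of $m_i$. Similarly, from~\eqref{eqtaxpr_b}, the price $\widehat{p}_i(m_{-i})$ is independent of $m_i$, so the first summand in~\eqref{eqtaxpr_a} is affine in $m_i$. Hence $v_i(\widehat{x}_i(m))-\widehat{p}_i(m_{-i})(\widehat{x}_i(m)-c_1/N)$ contributes, as a function of $m_i$, an affine term plus $v_i(y_i+C)$ for a constant $C$ depending on $m_{-i}$; by strict concavity of $v_i$ and the assumption $v_i''\in(-\eta,-1/\eta)$, this part contributes $-v_i''(\widehat{x}_i(m))>0$ to the entry $(y_i,y_i)$ of the Hessian of $-u_i$ and zero elsewhere.

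Next I would analyze the three quadratic penalty terms in~\eqref{eqtaxpr_a}. The terms $(q_i^r-\xi y_r)^2$ for $r\in\mN(i)$ and $(q_i^r-\xi q_{n(i,r)}^r)^2$ for $r\notin\mN(i)\cup\{i\}$ are, as functions of $m_i$, strictly convex univariate quadratics in $q_i^r$ (contributing a diagonal entry $2$ each), since the ``comparison'' variables $y_r$ and $q_{n(i,r)}^r$ are components of $m_{-i}$. The lone coupling term is $(q_i^i-\xi y_i)^2$, which contributes the $2\times 2$ block
\begin{equation*}
B=\begin{pmatrix}2\xi^2 & -2\xi\\ -2\xi & 2\end{pmatrix}
\end{equation*}
in the $(y_i,q_i^i)$ coordinates of $\nabla^2\widehat{t}_i$. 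This block is positive semidefinite but singular, which is precisely where the potential obstacle lies.

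The main obstacle is thus to show that strict concavity of $u_i$ survives the degeneracy of $B$. This is resolved by combining the contributions: the full Hessian of $-u_i$ is block diagonal, with one-dimensional blocks $2$ for each $q_i^r$ with $r\neq i$ (clearly positive) and a $2\times 2$ block in $(y_i,q_i^i)$ equal to
\begin{equation*}
\begin{pmatrix}-v_i''(\widehat{x}_i(m))+2\xi^2 & -2\xi\\ -2\xi & 2\end{pmatrix}.
\end{equation*}
Its determinant equals $-2v_i''(\widehat{x}_i(m))>2/\eta>0$ and its trace is positive, so it is positive definite. Hence $\nabla^2(-u_i)\succ 0$ everywhere on $\mathcal{M}_i$, giving strict concavity of $u_i$ in $m_i$. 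Uniqueness of $\beta_i(m_{-i})$ and its characterization via $\nabla_{m_i}u_i=0$ then follow because $u_i$ is a strictly concave, continuously differentiable function on $\mathbb{R}^{N+1}$ (and coercivity of the quadratic penalties guarantees the maximizer exists).
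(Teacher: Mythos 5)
Your proof is correct and takes essentially the same route as the paper: both reduce to the $(N+1)\times(N+1)$ Hessian, whose only nontrivial part is the $2\times 2$ block in $(y_i,q_i^i)$, and your determinant--trace check of that block is exactly the paper's observation that the quadratic factor of the characteristic polynomial has positive product and negative sum of roots (constant term $-2v_i''>0$). The only cosmetic caveat is your closing parenthetical: the quadratic penalties alone are \emph{not} coercive along the direction $(1,\xi)$ in the $(y_i,q_i^i)$ plane (that block is singular), so existence of the maximizer really rests on the uniform bound $v_i''\le -1/\eta$ making $-u_i$ strongly convex --- a point the paper itself glosses over.
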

\begin{proof}
	Please see Appendix~\ref{proofconcpr}.
\end{proof}
	
Concavity of the induced utility in the game $ \mathfrak{G}_{pvt} $ largely follows from the tax terms being quadratic. The second tax term in~\eqref{eqtaxpr_a} is the only source of cross derivatives across components of message $ m_i $. Concavity is proved by verifying that the Hessian matrix is negative definite.

\begin{theorem}[Full Implementation and Budget Balance] \label{thmFIpr}
	For the game $ \mathfrak{G}_{pvt} $, there exists a unique Nash equilibrium, $ \widetilde{m} \in \mathcal{M} $, and the allocation at Nash equilibrium is efficient, i.e., $ \widehat{x}_i(\widetilde{m}) = x_i^* $, $ \forall $ $ i \in \mN $. Further, the total tax paid at Nash equilibrium $ \widetilde{m} $ is zero, i.e.,
\begin{equation}
	\sum_{i\in\mN} \widehat{t}_i(\widetilde{m}) = 0.
\end{equation}
\end{theorem}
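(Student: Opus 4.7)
The plan is to invoke Lemma~\ref{thmconcpr} so that any Nash equilibrium $\widetilde m$ is fully characterized by the first-order conditions of each agent's utility, then use those FOCs first to force the local surrogates to be scaled copies of the global demand vector and afterwards to recover the KKT conditions~\eqref{eqKKTpr}. The principal obstacle will be the propagation step: one has to verify that the graph-based shortest-path recursion baked into~\eqref{eqallopr}--\eqref{eqtaxpr_b} closes cleanly, so that these locally-computed quantities reproduce the global sums required for efficiency.

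First I would differentiate $u_i$ with respect to each component $q_i^r$. Since every $q_i^r$ appears in exactly one quadratic summand of~\eqref{eqtaxpr_a}, the FOCs immediately yield the duplication relations $q_i^i = \xi y_i$, $q_i^r = \xi y_r$ for $r \in \mN(i)$, and the recursion $q_i^r = \xi q_{n(i,r)}^r$ for $r \notin \mN(i)$, $r \ne i$. Using $d(n(i,r),r) = d(i,r)-1$ together with strong connectedness of $\mathcal G$, I would iterate this recursion along the shortest $i\to r$ path to conclude $q_i^r = \xi^{d(i,r)} y_r$ at any NE. This is the heart of the argument: it shows that the network-respecting messages still encode a scaled copy of every agent's demand, and the factors $\xi^{d(i,r)-1}$ in the denominators of~\eqref{eqallopr} and~\eqref{eqtaxpr_b} are designed to exactly cancel these scalings.

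Carrying out the substitution then gives the drastic simplifications $\hat{x}_i(\widetilde m) = y_i - \frac{1}{N-1}\sum_{r\ne i} y_r + \frac{c_1}{N}$ and $\hat{p}_i(\widetilde m_{-i}) = \frac{1}{\delta}\sum_{r\in\mN} y_r =: \lambda$, the same scalar across all agents. Summing the first identity over $i$ yields the resource balance $\sum_i \hat{x}_i(\widetilde m) = c_1$, and the FOC $\partial u_i/\partial y_i = 0$ (using that $\hat{p}_i$ does not depend on $m_i$ and that the squared terms vanish at the NE values of the surrogates) collapses to $v_i'(\hat{x}_i(\widetilde m)) = \lambda$. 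These are precisely the KKT conditions~\eqref{eqKKTpr}, so strict concavity of $\sum_i v_i$ forces $\hat{x}_i(\widetilde m) = x_i^*$ and $\lambda = \lambda_1^*$, which is full implementation.

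Uniqueness of $\widetilde m$ follows by reversing the chain: $\lambda = \lambda_1^*$ fixes $\sum_r y_r = \delta\lambda_1^*$, the affine system $\hat{x}_i(\widetilde m) = x_i^*$ then pins down each $y_i$, and the duplication relations determine every surrogate $q_i^r$; existence is obtained by plugging this candidate back into all FOCs and checking they hold. Budget balance is then immediate, since at $\widetilde m$ every quadratic term in~\eqref{eqtaxpr_a} is zero and $\hat{p}_i = \lambda$ is the common price, giving $\sum_i \hat{t}_i(\widetilde m) = \lambda \sum_i \bigl(\hat{x}_i(\widetilde m) - c_1/N\bigr) = \lambda(c_1 - c_1) = 0$.
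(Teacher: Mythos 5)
Your proposal is correct and follows essentially the same route as the paper's own proof: characterize any NE via the first-order conditions from Lemma~\ref{thmconcpr}, propagate the surrogate recursion along shortest paths to get $q_i^r=\xi^{d(i,r)}y_r$, match the resulting allocation--price pair to the KKT conditions~\eqref{eqKKTpr}, invert the affine system in $(y_i)_{i\in\mN}$ for existence and uniqueness, and observe that all quadratic tax terms vanish at NE for budget balance. No substantive differences.
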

\begin{proof}
	Please see Appendix~\ref{proofFIpr}.
\end{proof}
	
The proof of Proposition~\ref{thmFIpr} can be intuitively explained as follows. Since the optimality conditions in~\eqref{eqKKTpr} are sufficient, we start by showing that at any Nash equilibrium the allocation and price necessarily satisfy the optimality conditions. Thereby ensuring that if Nash equilibrium exists (unique or multiple) the corresponding allocation is efficient. Then we show existence and uniqueness by showing a one-to-one map between message at Nash equilibrium and $ (x^*,\lambda^*) $ arising out of the optimization in~\eqref{eqcppr}.

\subsubsection*{Generalizing to multiple goods $ (K > 1) $}
	
For the general problem we use notation on a per good basis. The message space is, 
\begin{equation}
	\mathcal{M}_i = \underset{k \in \mathcal{K}}{\times} \mathcal{M}_i^k = \underset{k \in \mathcal{K}}{\times} \mathbb{R}^{(N+1)} = \mathbb{R}^{(N+1)K}.
\end{equation}
Any message $ m_i = \left( m_i^k \right)_{k \in \mathcal{K}} =  (y_i,q_i) $  contains separate demands and proxies for each good $ k \in \mathcal{K} $. Denote it as follows
\begin{subequations}
\begin{align}
	y_i &= \left(y_i^k\right)_{k \in \mathcal{K}} \in \mathbb{R}^K,
	\\
	q_i &= \big( q_i^{r,k} \big)_{r \in \mN, k \in \mathcal{K}} \in \mathbb{R}^{NK},
\end{align}
\end{subequations}
where for any good $ k \in \mathcal{K} $, demand and proxies $ \big( y_i^k, (q_i^{r,k} )_{r \in \mN} \big) \in \mathbb{R}^{N+1} $ have the same interpretation as $ (y_i,q_i) $ in the presented mechanism above with only one good. The allocation function is $ \widehat{x}_i^k(\cdot) $, for any $ i \in \mN $, $ k \in \mathcal{K} $ and the tax function is
\begin{equation}
	\widehat{t}_i(\cdot) = \sum_{k \in \mathcal{K}} \widehat{t}_i^{\, k}(m), \quad\forall~i \in \mN.
\end{equation}
Here both functions $ \widehat{x}_i^k(\cdot) $ and $ \widehat{t}_i^{\, k}(\cdot) $, depend only on $ m^k $, the part of message $ m $ pertaining to good $ k $. The expression for both functions are the same as in the presented mechanism above,~\eqref{eqallopr} and~\eqref{eqtaxpr}, replacing $ m $ by $ m^k $.

%\paragraph{Results}
The generalized mechanism is distributed as well, since all allocation and tax functions still depend only on message of the agent and that of its neighbors. Owing to the design with a good-wise separation, the utility has the following form
\begin{equation}
	u_i(m) = v_i\left( \left( \widehat{x}_i^k(m^k) \right)_{k \in \mathcal{K}} \right) - \sum_{k \in \mathcal{K}}\widehat{t}_i^{\,k}(m^k).
\end{equation}
Concavity of $ u_i $ follows from arguments similar to Proposition~\ref{thmconcpr}, where we verify the Hessian to be negative definite. Since the optimality conditions in~\eqref{eqKKTpr} are sufficient, the properties of efficiency, existence and uniqueness of Nash equilibrium follow from the first order conditions for optimality in the best-response. Finally, the Budget Balance result holds true on a per-good basis, so it also holds for the total tax which is the sum of per-good taxes.

\section{A mechanism for the public goods problem}
\label{secmechpub}
	
For the purpose of maintaining clarity in the exposition, the presented mechanism below is for the special case of a single feature in the public good i.e., $ P=1 $. A natural extension to the general case is discussed at the end of this section.
	
For any agent $ i\in \mN $, the message space is $ \mathcal{M}_i = \mathbb{R}^{N+1} $. The message $ m_i = (y_i,q_i) $ consists of agent $ i $'s contribution $ y_i \in \mathbb{R} $ to the common public good and a surrogate/proxy $ q_i = \left( q_i^1,\ldots,q_i^N \right) \in \mathbb{R}^N $ for the contributions of all the agents (including himself/herself). 
	
The allocation function is defined as
\begin{equation}\label{eqallopub}
	\widehat{x}_i(m) = \frac{1}{N} \left( y_i + \sum_{r \in \mN(i)} \frac{q_{r}^r}{\xi} + \sum_{\substack{r \notin \mN(i) \\ r \ne i} } \frac{q_{n(i,r)}^r}{\xi^{d(i,r)-1}}\right), \quad \forall~i \in \mN.
\end{equation}
The tax function is 
\begin{subequations}\label{eqtaxpub}
\begin{align}
	\nonumber 
	\widehat{t}_i(m) &= \widehat{p}_i(m_{-i}) \widehat{x}_i(m) +  \left(  q_i^i - \xi y_i \right)^2
	+ \sum_{\substack{r \in \mathcal{N}(i)}}  \left( q_i^r - \xi y_r \right)^2 
	+ \sum_{\substack{r \notin \mathcal{N}(i)\\r \ne i}}  \left(q_i^r - \xi q_{n(i,r)}^r \right)^2,
	\\ 	\label{eqtaxpub_a}
	&\quad + \frac{\delta}{2} \left( q_{n(i,i)}^i - \xi y_i \right)^2  
	\\
	\label{eqtaxpub_b}
	\widehat{p}_i(m_{-i}) &= \delta(N-1) \left( \frac{q_{n(i,i)}^i}{\xi} - \frac{1}{N-1} \sum_{r \in \mN(i)} \frac{q_{r}^r}{\xi} - \frac{1}{N-1}  \sum_{\substack{r \notin \mN(i) \\ r \ne i} } \frac{q_{n(i,r)}^r}{\xi^{d(i,r)-1}}  \right), \quad \forall~i \in \mN,  
\end{align} 
\end{subequations}
where $ n(i,i) \in \mN(i) $ is an arbitrarily chosen neighbor of $ i $ and $ \xi \in (0,1) $, $ \delta > 0 $ are appropriately chosen parameters and their selection is discussed in Section~\ref{seclearning} on Learning Guarantees, proof of Proposition~\ref{thmcontracpub}.
	
In the following, denote the induced game by $ \mathfrak{G}_{pub} $ and define the utility in the game,~\eqref{equtil}, and the best-response,~\eqref{eqBRdefpr}, as in the previous section.

\subsection{Results} 
	
\begin{fact}[Distributed]
	The mechanism defined in~\eqref{eqallopub} and~\eqref{eqtaxpub} is distributed.
\end{fact}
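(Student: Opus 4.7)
The plan is to verify the defining property of distributedness directly, by a term-by-term inspection of the expressions in~\eqref{eqallopub} and~\eqref{eqtaxpub}. Since both $\widehat{x}_i(\cdot)$ and $\widehat{t}_i(\cdot)$ are explicit finite sums of elementary expressions, no estimates or fixed-point arguments are needed; the whole task reduces to checking the subscripts on every $y$ and $q$ variable that appears and confirming that each resolves to either $m_i$ itself or to $m_r$ for some $r \in \mN(i)$.

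First I would handle~\eqref{eqallopub}. The term $y_i$ is a coordinate of $m_i$. In the sum over $r \in \mN(i)$, the coordinate $q_r^r$ sits inside $m_r$ for a direct out-neighbor $r$. In the sum over $r \notin \mN(i)$, $r\ne i$, the coordinate $q_{n(i,r)}^r$ sits inside $m_{n(i,r)}$, and the selector $n(i,r)$ belongs to $\mN(i)$ by its very construction (the out-neighbor of $i$ through which the shortest path to $r$ passes). So $\widehat{x}_i(m)$ reads only $m_i$ and messages of immediate neighbors.

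Next I would split the tax into the price factor $\widehat{p}_i(m_{-i})$ of~\eqref{eqtaxpub_b}, the four quadratic penalties of~\eqref{eqtaxpub_a}, and the already-handled factor $\widehat{x}_i(m)$. The price uses the same three kinds of terms as the allocation, plus the extra term $q_{n(i,i)}^i$ with $n(i,i) \in \mN(i)$ by definition, so it is a function of neighbor messages only. Each quadratic penalty, examined individually, involves either $m_i$ alone, or $(m_i, m_r)$ with $r \in \mN(i)$, or $(m_i, m_{n(i,r)})$ with $n(i,r) \in \mN(i)$, or $(m_i, m_{n(i,i)})$ with $n(i,i) \in \mN(i)$. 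In every case only $m_i$ and messages of out-neighbors are consulted.

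There is no real obstacle here; the only conceptual point is that the selector $n(\cdot,\cdot)$ never escapes the immediate out-neighborhood of $i$, and this is hard-wired into its definition. Once that is stated, the claim follows by listing the terms. The structure of the argument is identical to the analogous remark for the private-goods mechanism and could be presented in a single sentence, exactly as the authors do.
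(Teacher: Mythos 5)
Your verification is correct and is exactly the argument the paper intends: the paper proves the analogous fact for the private-goods mechanism in one sentence (the expressions depend only on $m_i$ and $\big(m_r\big)_{r \in \mN(i)}$) and leaves the public-goods case to the same inspection, including the observation that $n(i,r)$ and $n(i,i)$ always lie in $\mN(i)$ by construction. Nothing is missing; your term-by-term listing is simply a more explicit write-up of the same check.
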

	
The optimality conditions in~\eqref{eqKKTpub} require that agents have global consensus on two aspects: allocation must be the same for all and sum of prices should be equal to zero. As in the private goods mechanism, we design the second, third and fourth tax terms in~\eqref{eqtaxpub_a} such that agents are incentivized to duplicate global demand $ y $ onto locally available variables $ q_i $. To motivate the allocation function and the remaining part of the tax function consider the case of $ \xi = 1 $ and take into account the duplication $ q_r = y $, $ \forall~r\in \mN $. In this case all the factors involving $ \xi $ in~\eqref{eqallopub} and~\eqref{eqtaxpub} are 1. The allocation function $ \widehat{x}_i(m) $ depends on $ y_i, \left( q_r \right)_{r \in \mN(i)} $ and is designed such that after taking into account the duplication it is proportional to $ \sum_j y_j $. This facilitates the first consensus - all agents' allocation must be the same. The price $ \widehat{p}_i(m_{-i}) $ is designed such that it depends only on $ \left( q_r \right)_{r \in \mN(i)} $ and after taking into account the duplication it is proportional to $ y_i - \frac{1}{N-1} \sum_{j \ne i} y_j $. This facilitates the second consensus - sum of prices over all agents is zero. 
	
With the above design principles, all the results of this section follow. We then introduce an additional fifth term in the tax,~\eqref{eqtaxpub_a}, just for the purpose of achieving contraction in Section~\ref{seclearning} (see proof of Proposition~\ref{thmcontracpub}). Incentives provided by this term are in line with those already provided to the neighboring agent $ n(i,i) $ through his/her third tax term, hence it doesn't interfere with the equilibrium results in this section. Only the proof of concavity changes slightly.
		
Finally, we set $ \xi < 1 $ and adjust everything in the allocation and tax function correspondingly so that the game $ \mathfrak{G}_{pub} $ can be contractive (see Section~\ref{seclearning}).
	
For the next two results, the basic idea behind the proofs are similar to the corresponding results from the previous section.	
	
\begin{lemma}[Concavity]\label{thmconcpub}
	For any agent $ i \in \mN $ and $ m_{-i} \in \mathcal{M}_{-i} $, the utility $ u_i(m) $ for the game $ \mathfrak{G}_{pub} $ is strictly concave in $ m_i=(y_i,q_i) $. Thus, the best-response of agent $ i $ is unique and is defined by the first order conditions.
\end{lemma}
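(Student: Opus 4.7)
The plan is to show that the Hessian of $u_i(m)$ with respect to $m_i = (y_i, q_i^1, \ldots, q_i^N)$ is negative definite for every fixed $m_{-i}$, from which strict concavity and uniqueness of the best-response follow. Since $\widehat{x}_i(m)$ depends on $m_i$ only through $y_i$, with $\partial \widehat{x}_i/\partial y_i = 1/N$, the chain rule gives $\partial^2 v_i(\widehat{x}_i)/\partial y_i^2 = v_i''(\widehat{x}_i)/N^2 < 0$ and all other second partials of $v_i(\widehat{x}_i(m))$ in $m_i$ vanish.

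Next I would read off the contribution of each tax term in~\eqref{eqtaxpub_a} to the Hessian. The term $\widehat{p}_i(m_{-i})\widehat{x}_i(m)$ is linear in $m_i$ (since $\widehat{p}_i$ does not depend on $m_i$ and $\widehat{x}_i$ is linear in $y_i$) and thus contributes nothing. Among the quadratic terms, only $(q_i^i - \xi y_i)^2$ produces a cross-derivative within $m_i$, yielding a nontrivial $2\times 2$ block on the coordinates $(y_i,q_i^i)$. The sums $\sum_{r \in \mN(i)}(q_i^r - \xi y_r)^2$ and $\sum_{r \notin \mN(i),\, r \neq i}(q_i^r - \xi q_{n(i,r)}^r)^2$ contribute only $-2$ on the diagonal entries for $q_i^r$, $r\ne i$, because $y_r$ and $q_{n(i,r)}^r$ for $r \neq i$ belong entirely to $m_{-i}$. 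The fifth term $\tfrac{\delta}{2}(q_{n(i,i)}^i - \xi y_i)^2$ involves $q_{n(i,i)}^i \in m_{n(i,i)} \subset m_{-i}$ together with $y_i \in m_i$, so it adds only $-\delta \xi^2$ to the $(y_i,y_i)$ Hessian entry.

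Assembling these observations, the Hessian of $u_i$ with respect to $m_i$ is block diagonal: a $2\times 2$ block
\[
H_0 = \begin{pmatrix} v_i''(\widehat{x}_i)/N^2 - 2\xi^2 - \delta\xi^2 & 2\xi \\ 2\xi & -2 \end{pmatrix}
\]
on $(y_i,q_i^i)$, and a $-2 I_{N-1}$ block on the remaining coordinates $q_i^r$, $r\ne i$. The second block is trivially negative definite. For $H_0$, the $(1,1)$ entry is negative since $v_i''<0$, and a direct computation gives $\det(H_0) = -2 v_i''(\widehat{x}_i)/N^2 + 2\delta \xi^2 > 0$. By Sylvester's criterion $H_0$ is negative definite, hence so is the full Hessian. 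Strict concavity of $u_i(\cdot,m_{-i})$ on $\mathcal{M}_i$ then gives a unique best-response characterized by $\nabla_{m_i} u_i(m)=0$.

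The only place where the argument truly departs from the private-goods proof of Lemma~\ref{thmconcpr} is the treatment of the extra fifth tax term; this is the ``slight'' change alluded to after the statement of the mechanism. That term only strengthens the $(y_i,y_i)$ entry of the Hessian by $-\delta\xi^2$, which strictly helps rather than hinders negative definiteness, so no new obstacle arises and the structural argument above goes through essentially verbatim.
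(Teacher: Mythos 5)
Your proof is correct and follows essentially the same route as the paper: both compute the same Hessian (with the fifth tax term contributing the extra $-\delta\xi^2$ to the $(y_i,y_i)$ entry) and verify negative definiteness, the paper by factoring the characteristic polynomial and checking the signs of the roots of the resulting quadratic, you by applying Sylvester's criterion to the $2\times 2$ block on $(y_i,q_i^i)$. These are interchangeable verifications of the same block structure, so no substantive difference.
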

\begin{proof}
	Please see Appendix~\ref{proofconcpub}.	
\end{proof}

\begin{theorem}[Full Implementation and Budget Balance] \label{thmFIpub}
	For the game $ \mathfrak{G}_{pub} $, there exists a unique Nash equilibrium, $ \widetilde{m} \in \mathcal{M} $, and the allocation at Nash equilibrium is efficient, i.e., $ \widehat{x}(\widetilde{m}) = x^* $. Further, the total tax paid at Nash equilibrium $ \widetilde{m} $ is zero, i.e., 
\begin{equation}
	\sum_{i\in\mN} \widehat{t}_i(\widetilde{m}) = 0.
\end{equation}
\end{theorem}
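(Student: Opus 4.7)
My plan follows the architecture of Proposition~\ref{thmFIpr}. By Lemma~\ref{thmconcpub} each best response is unique and characterized by its first-order conditions, so the NE set coincides with the common zero-locus of $(\nabla_{m_i} u_i)_i$. I will first extract necessary conditions from these FOCs, show they force the KKT system~\eqref{eqKKTpub}, then invert the resulting map to obtain existence and uniqueness, and finally compute $\sum_i \widehat{t}_i(\widetilde{m})$ directly for budget balance.

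The FOCs with respect to the proxy coordinates $q_i^r$ decouple across $r$ and are linear, because each $q_i^r$ enters $u_i$ only through a single quadratic penalty in~\eqref{eqtaxpub_a} (it is absent from both $\widehat{x}_i$ and $\widehat{p}_i$). They yield $\widetilde{q}_i^i = \xi \widetilde{y}_i$, $\widetilde{q}_i^r = \xi \widetilde{y}_r$ for $r \in \mN(i)$, and $\widetilde{q}_i^r = \xi \widetilde{q}_{n(i,r)}^r$ otherwise. Using the recursion $d(n(i,r),r) = d(i,r)-1$ together with the strong connectivity of $\mathcal{G}$, I can unroll these into a cascade $\widetilde{q}_i^r = \xi^{d(i,r)} \widetilde{y}_r$ for every $r \neq i$. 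Substituting the cascade into~\eqref{eqallopub} makes every $\xi$-power cancel and collapses the allocation to the agent-independent value $\widehat{x}_i(\widetilde{m}) = \frac{1}{N}\sum_j \widetilde{y}_j$; substituting into~\eqref{eqtaxpub_b} turns $\widehat{p}_i$ into a linear function of $\widetilde{y}$ whose sum over $i$ vanishes by the symmetric pairing of the own-demand coefficient with the neighbor-demand coefficients.

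Next, the FOC with respect to $y_i$ — once the three penalty-gradient terms that are killed by the cascade are dropped — reduces to $v_i'(\widehat{x}_i(\widetilde{m})) = \widehat{p}_i(\widetilde{m}_{-i})$. Combined with the common allocation and $\sum_i \widehat{p}_i(\widetilde{m}_{-i}) = 0$ established above, this is exactly the KKT system~\eqref{eqKKTpub}; strict concavity of~\eqref{eqcppub} then forces $\widehat{x}_i(\widetilde{m}) = x^*$ and $\widehat{p}_i(\widetilde{m}_{-i}) = \mu_i^*$ at any NE. For existence and uniqueness of the NE itself I invert this correspondence: given the unique $(x^*,\mu^*)$, the linear relation $\widehat{p}_i = \mu_i^*$ under the constraint $\widehat{x}_i = x^*$ determines $\widetilde{y}$ uniquely, and the cascade then determines every $\widetilde{q}_i^r$; one then verifies by direct substitution that the resulting profile satisfies all FOCs and hence is an NE.

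Budget balance then follows with little extra work: on the NE cascade each of the quadratic terms in~\eqref{eqtaxpub_a} evaluates to zero, so $\widehat{t}_i(\widetilde{m}) = \widehat{p}_i(\widetilde{m}_{-i}) \widehat{x}_i(\widetilde{m})$, and pulling the common factor $\widehat{x}_i(\widetilde{m}) = x^*$ out of the sum yields $\sum_i \widehat{t}_i(\widetilde{m}) = x^* \sum_i \widehat{p}_i(\widetilde{m}_{-i}) = 0$ by~\eqref{eqKKTpub_b}. The step I expect to require the most care is handling the extra fifth penalty $\tfrac{\delta}{2}(q_{n(i,i)}^i - \xi y_i)^2$, which couples $i$'s utility to a variable controlled by $n(i,i)$ and contributes an additional term $-\delta\xi(\widetilde{q}_{n(i,i)}^i - \xi \widetilde{y}_i)$ to the $y_i$-FOC. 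I must verify that the cascade makes this penalty itself vanish at NE (so it contributes nothing to $\sum_i \widehat{t}_i$) \emph{and} that its gradient contribution in the $y_i$-FOC cancels — i.e.\ that the ``in line with agent $n(i,i)$'s third tax term'' alignment asserted after~\eqref{eqtaxpub} genuinely holds — so that the reduced FOC $v_i'(\widehat{x}_i) = \widehat{p}_i$ is recovered unchanged.
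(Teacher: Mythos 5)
Your plan coincides with the paper's proof essentially step for step: first-order conditions via Lemma~\ref{thmconcpub}, the cascade $\widetilde{q}_i^r = \xi^{d(i,r)}\widetilde{y}_r$ obtained by unrolling the $q$-FOCs along shortest paths, reduction of the $y_i$-FOC to $v_i'(\widehat{x}_i)=\widehat{p}_i$ and hence to the KKT system~\eqref{eqKKTpub}, inversion of the linear map between $\widetilde{y}$ and $(x^*,\mu^*)$ (giving $\widetilde{y}_i = x^* + {\mu_i^1}^*/(\delta N)$) for existence and uniqueness, and budget balance from the vanishing quadratic penalties plus $\sum_i {\mu_i^1}^* = 0$. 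The one point you flag as delicate --- whether the cascade annihilates the fifth penalty $\tfrac{\delta}{2}(q_{n(i,i)}^i - \xi y_i)^2$ and its gradient contribution --- is precisely the point the paper asserts without detail; it requires $\widetilde{q}_{n(i,i)}^i = \xi\widetilde{y}_i$, i.e., $d(n(i,i),i)=1$, which the cascade delivers only when the arbitrarily chosen out-neighbor $n(i,i)$ in turn has $i$ among its own out-neighbors, so your instinct to verify that alignment rather than take it on faith is well placed.
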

\begin{proof}
	Please see Appendix~\ref{proofFIpub}.	
\end{proof}

\subsubsection*{Generalizing to multiple features $ (P > 1) $}
	
With an idea similar to the extension for the private goods mechanism, extend the presented mechanism by first increasing the message space such that for each agent $ m_i = \big( m_i^p \big)_{p \in \mathcal{P}} = (y_i,q_i) $. The allocation in this case is $ P- $dimensional and the expression for $ \widehat{x}_i^p(m) $ is the same as in the presented mechanism with $ y_i,\big(q_r\big)_{r \in \mN(i)} $ replaced by $ y_i^p,\big(q_r^p\big)_{r \in \mN(i)} $. The tax function is
\begin{equation}
	\widehat{t}_i(m) = \sum_{p \in \mathcal{P}} \widehat{t}_i^{\: p}(m),
\end{equation}
where the expression for $ \widehat{t}_i^{\: p} $ is the same as in the presented mechanism, replacing $ m $ by $ m^k $.
%	$ y_i,q_i,\left(q_r\right)_{r \in \mN(i)} $ with $ y_i^p,q_i^p,\left(q_r^p\right)_{r \in \mN(i)} $.
	
The results above follow using analogous arguments to those mentioned in the previous section regarding generalization to multiple goods, $ K > 1 $.

\section{Learning Guarantees} \label{seclearning} 
	
This section provides the result for guaranteed convergence for a class of learning algorithms, when the mechanisms defined in Sections~\ref{secpr} and~\ref{secmechpub} are played repeatedly. As discussed in the Introduction, such results act as a measure of robustness of a mechanism, w.r.t. information available to agents, and thus makes the mechanism ready for practical applications.  
	
A learning dynamic is represented by 
\begin{equation}
	\left( \mu_n \right)_{n \ge 1} \subseteq \underset{i \in \mN}{\times} \Delta(\mathcal{M}_i),
\end{equation}
where $ \mu_n $ is a mixed strategy profile with product structure to be used at time $ n $. Denote by $ S(\mu_n) \subseteq \mathcal{M} $ the support of the mixed strategy profile $ \mu_n $ and denote by $ m_n \in S(\mu_n) $ the realized action. Healy and Mathevet in~\cite{healy2012designing} define the Adaptive Best-Response (ABR) dynamics class by restricting the support $ S(\mu_n) $ in terms of past observed actions. Define the history $ H_{n^\prime,n} = \left( m_{n^\prime},m_{n^\prime + 1},\ldots,m_{n-1} \right) $ as the set of observed actions between $ n^\prime $ and $ n-1 $. Denote by $ \widetilde{m} $ the unique Nash equilibrium of the game and define $ B\left( \mathcal{M}^\prime \right) $ as the smallest closed ball centered at $ \widetilde{m} $ that contains the set $ \mathcal{M}^\prime \subset \mathcal{M} $. The closed ball is defined with any valid metric $ d $ on the message space $ \mathcal{M} $. 
	
A learning dynamic is in the ABR class if any point in the support of the action at time $ n $ is no further from the Nash equilibrium than the best-response to any action that is no further from Nash equilibrium than the ``worst-case'' action that has been observed in some finite past $ \{ n^\prime,\ldots,n-1 \} $. %Formally, we have the following.
	%\subsection{Adaptive Best-Response dynamics class}
	
\begin{definition}[Adaptive Best-Response Learning Class~{\cite{healy2012designing}}]
	A learning dynamic is an adaptive best-response dynamic if $ \forall~n^\prime,~\exists~\hat{n} > n^\prime,~\text{s.t.}~\forall~ n \ge \hat{n} $,
\begin{equation}
	S(\mu_n) \subseteq B\left( \beta\left( B\left( H_{n^\prime,n} \right) \right) \right), 
\end{equation}
	where $ \beta:\mathcal{M} \rightarrow \mathcal{M} $ is the best-response of the game.
\end{definition}
	
The above is satisfied for instance if every agent puts belief zero over actions further from Nash equilibrium than the ones that he/she has observed in the past. Some well-known learning dynamics are in the ABR class, following are a few examples.
	
Cournot best-response is defined as $ S(\mu_n) = m_n = \beta(m_{n-1}) $, i.e., at every time agents best-respond to the last round's action. This gives rise to a deterministic strategy at each time. More generally, $ k- $period best-response is defined as the learning dynamic where at any time $ n $, an agent $ i $'s strategy is a best-response to the mixed strategy of agents $ j \ne i $ which are created using the observed empirical distribution from the actions of the previous $ k- $rounds i.e., $ \{ m_{j,n-k},\ldots,m_{j,n-1} \} $. In fact, the generalization of this is also in the ABR class, where at each time $ n $, an agent $ i $'s strategy is the best-response to the mixed strategy of agents $ j \ne i $ that is formed by taking any convex combination of the empirical distributions of actions observed in the previous $ k- $rounds. Finally, Fictitious Play~\cite{brown,fudenberglearning}, which maintains empirical distribution of all the past actions (instead of $ k $ most recent ones) is also in ABR. The additional requirement for this is that the utility in the game should be strictly concave, which is true for the presented mechanisms (see Lemmas~\ref{thmconcpr} and~\ref{thmconcpub}).

\begin{definition}[Contractive Mechanism]
	Let $ d $ be any metric defined on the message space $ \mathcal{M} $ such that $ \left(\mathcal{M},d\right) $ is a complete metric space. A mechanism is called contractive if for any profile of utility function $ \big( v_i(\cdot) \big)_{i \in \mN} $ that satisfy the assumptions of the model, the induced game $ \mathfrak{G}_{pvt} $ or $ \mathfrak{G}_{pub} $ (depending on the allocation problem) has a single-valued best-response function $ \beta: \mathcal{M} \rightarrow \mathcal{M} $ that is a $ d- $contraction mapping.
\end{definition}
	
A function $ h:\mathcal{M} \rightarrow \mathcal{M} $ is a $ d- $contraction mapping if $ \Vert h(x) - h(y) \Vert_d < \Vert x - y \Vert_d  $ for all distinct $ x,y \in \mathcal{M} $. For the results below, the following well-known check for contraction mapping is used: $ h $ is a contraction mapping (for some metric $ d $) if the Jacobian has norm less than one, i.e., $ \Vert \nabla h \Vert < 1 $, where any matrix norm can be considered. Specifically, we consider the \emph{row-sum} norm. 
	
For the game induced by a contractive mechanism, by definition, there is a unique Nash equilibrium. This is due to the Banach fixed-point theorem, which gives that the best-response iteration for the induced game converges to a unique point. As shown in the previous sections, the game $ \mathfrak{G}_{pvt} $ and $ \mathfrak{G}_{pub} $ already have a unique Nash equilibrium.

\begin{fact}[{\cite[Theorem 1]{healy2012designing}}] \label{factabr}
	If a game is contractive, then all ABR dynamics converge to the unique Nash equilibrium.
\end{fact}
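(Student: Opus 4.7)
The plan is to exploit the interplay between the single-step geometric shrinkage produced by the contraction of $\beta$ and the ``memoryful'' defining inclusion of the ABR class. Since $(\mathcal{M},d)$ is complete and $\beta$ is a $d$-contraction with Lipschitz constant $c<1$, Banach's fixed-point theorem already gives a unique fixed point of $\beta$, which must coincide with the unique Nash equilibrium $\widetilde{m}$ (the best-response characterization of NE). The goal is then to show that, along any realized path $(m_n)_{n\ge 1}$ generated by an ABR learner, $d(m_n,\widetilde{m})\to 0$, and more generally that $\sup_{m \in S(\mu_n)} d(m,\widetilde{m})\to 0$.

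First I would establish a ``one-round shrink'' lemma: for any nonempty $\mathcal{M}^\prime \subseteq \mathcal{M}$, if $B(\mathcal{M}^\prime)$ has radius $R$ around $\widetilde{m}$, then for every $m$ in that ball, $d(\beta(m),\widetilde{m}) = d(\beta(m),\beta(\widetilde{m}))\le cR$, so $B(\beta(B(\mathcal{M}^\prime)))$ has radius at most $cR$. Applied to $\mathcal{M}^\prime = H_{n^\prime,n}$, this combined with the ABR inclusion $S(\mu_n)\subseteq B(\beta(B(H_{n^\prime,n})))$ yields, for all $n\ge \hat n$, the bound $d(m_n,\widetilde{m})\le c\cdot R_n$, where $R_n:=\max_{k\in[n^\prime,n-1]}d(m_k,\widetilde{m})$ is precisely the radius of $B(H_{n^\prime,n})$. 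A short induction in $n$ then shows that $R_n$ stabilizes at $R_0 := R_{\hat n}$ for all $n \ge \hat n$, because each newly generated action satisfies $d(m_n,\widetilde{m})\le cR_n<R_n$ and therefore cannot enlarge the enclosing ball.

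The main obstacle, and the step that determines the structure of the proof, is that a single invocation of the ABR condition with a fixed $n^\prime$ cannot push the bound below $cR_0$: the far-from-equilibrium initial observations are frozen into $H_{n^\prime,\cdot}$ and forever dominate its radius, so the argument cannot be iterated in place. The remedy is to slide the window forward. Set $n^\prime_1 = 1$ with threshold $\hat n_1$ and constant $R_0$ as above, and inductively define $n^\prime_{k+1} = \hat n_k$; invoking the ABR property again produces a new threshold $\hat n_{k+1}$ such that for all $n\ge \hat n_{k+1}$ every entry of $H_{n^\prime_{k+1},n}$ already lies within $c^{k}R_0$ of $\widetilde{m}$, whence the one-round shrink lemma delivers $S(\mu_n)\subseteq B_{c^{k+1}R_0}(\widetilde{m})$. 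Induction on $k$ then gives $\sup_{m\in S(\mu_n)}d(m,\widetilde{m})\le c^{k}R_0\to 0$, which yields convergence of every ABR dynamic to $\widetilde{m}$ at a rate no slower than geometric with ratio $c$. The only delicate bookkeeping is checking that the successive thresholds $\hat n_k$ are indeed produced by the ABR axiom at each restart, which is immediate since the axiom is quantified universally over starting times.
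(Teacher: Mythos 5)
Your argument is correct and is essentially the proof the paper is invoking: the statement is imported as a Fact from Healy and Mathevet with only a one-line sketch, and your one-round-shrink lemma combined with the restarted windows $n^\prime_{k+1}=\hat n_k$ is precisely the standard argument behind that theorem, including the key observation that a single fixed window cannot be iterated in place because the early observations never leave the history. The one inaccuracy is the closing claim of a geometric rate in the round index $n$: the ABR axiom puts no bound on how large each threshold $\hat n_k$ may be relative to $n^\prime_k$, so the factor $c^k$ is geometric only in the number of window restarts, not in $n$ --- but since the Fact asserts only convergence, this does not affect correctness.
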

	
The idea behind the above proof is to show that after a finite time, under any ABR dynamic the distance to equilibrium gets smaller (exponentially) between successively rounds due to the best-response being a contraction mapping. It is shown below that the presented mechanism for both, the private and public goods problems, is contractive and thus owing to the above result there is guaranteed convergence for all learning dynamics in the ABR class.
	
Contraction ensures convergence for the ABR class; this result is in the same vein as the one in the seminal work~\cite{milgrom1990}. Milgrom and Roberts show that Supermodularity ensures convergence for the Adaptive Dynamics class of learning algorithms (also defined in~\cite{milgrom1990}). Supermodularity requires that the best-response of any agent $ i $ is non-decreasing in the message $ m_j $ of any other agent $ j \ne i $. The aim in this paper is to get guarantees for the ABR class, however it is shown below that the game $ \mathfrak{G}_{pvt} $ induced by the private goods mechanism is also supermodular and thus has guaranteed convergence for the Adaptive Dynamics class as well.
	
Before proceeding, kindly note that contraction is somewhat more stringent a condition that supermodularity and consequently the ABR class is broader than the adaptive dynamics class. Thus in order to have better learning guarantees, our principle aim to ensure the property of contraction.

	\begin{comment}
	
	\begin{definition}[Supermodular Game]
	content...
	\end{definition}
	
	\begin{fact}[{\cite[Theorem 8]{milgrom1990}}]\label{factadap}
	content...
	\end{fact}
	
	\todo[inline]{Compare supermodularity with contraction and state the result of Milgrom}
	
	\end{comment}

\subsection{Results}

\begin{theorem}[Contraction]\label{thmcontracpr}
	The game $ \mathfrak{G}_{pvt} $ defined in Section~\ref{secmechpr} is contractive but not supermodular. Thus, all learning dynamics within the ABR dynamics class converge to the unique efficient Nash equilibrium.
\end{theorem}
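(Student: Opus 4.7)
The plan is to verify that the best-response map $\beta$ is a contraction in a suitable row-sum norm, and to separately exhibit a negative mixed partial of $u_i$ in some other agent's variable to rule out supermodularity; convergence of all ABR dynamics then follows from Fact~\ref{factabr}, and the resulting unique NE is already identified as efficient in Proposition~\ref{thmFIpr}. First I extract $\beta$ from the first-order conditions, which are necessary and sufficient by the strict concavity of Lemma~\ref{thmconcpr}. The FOCs in the $q_i^r$ components give the linear rules $\tilde q_i^i=\xi y_i$, $\tilde q_i^r=\xi y_r$ for $r\in\mN(i)\setminus\{i\}$, and $\tilde q_i^r=\xi q_{n(i,r)}^r$ for $r\notin\mN(i)\cup\{i\}$. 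Substituting $\tilde q_i^i=\xi y_i$ into the $y_i$-FOC makes the $(q_i^i-\xi y_i)^2$ cross term vanish at the best-response, collapsing the condition to $v_i'(\widehat x_i(\tilde y_i,m_{-i}))=\widehat p_i(m_{-i})$, which pins down $\tilde y_i$ uniquely. Implicit differentiation then yields $\partial\tilde y_i/\partial z=(1/v_i''(\widehat x_i))\,\partial\widehat p_i/\partial z-\partial\widehat x_i/\partial z$ for each component $z$ of $m_{-i}$.

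For the contraction bound, the $\tilde q_i^r$ rows for $r\ne i$ each have a single entry of magnitude $\xi<1$, and the $\tilde q_i^i$ row is $\xi$ times the $\tilde y_i$ row. The $\tilde y_i$ row is the delicate one: using $1/v_i''\in(-\eta,-1/\eta)$ from~\eqref{eqetapr}, each non-trivial entry has the form $(1/(v_i''\delta)+1/(N-1))/\xi^{d(i,r)-1}$, while the $q_{n(i,i)}^i$ contribution is just $1/(v_i''\delta\xi)$ since $\partial\widehat x_i/\partial q_{n(i,i)}^i=0$. The key observation is that $1/(v_i''\delta)$ and $-1/(N-1)$ can be forced to cancel most of their mass by choosing $\delta$ proportional to $N-1$; for instance $\delta=(N-1)(\eta+1/\eta)/2$ centers the interval of $1/(v_i''\delta)$ at $-1/(N-1)$ and leaves a residual of order $\rho/(N-1)$ with $\rho=(\eta^2-1)/(\eta^2+1)\in(0,1)$. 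Choosing $\xi$ close enough to $1$, and if necessary switching to a weighted row-sum norm with weights scaling as $\xi^{d(i,\cdot)}$ to absorb the residual $1/\xi^{d-1}$ factors along longer paths, then pushes every row-sum strictly below $1$. For non-supermodularity, note that $q_{n(i,i)}^i$ enters $\widehat p_i$ but not $\widehat x_i$, so a direct computation gives $\partial^2 u_i/\partial y_i\,\partial q_{n(i,i)}^i=-\partial\widehat p_i/\partial q_{n(i,i)}^i=-1/(\delta\xi)<0$, violating supermodularity.

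The main obstacle is the $\tilde y_i$ row-sum: its entries scale through $1/\xi^{d(i,r)-1}$ with graph distance, so a naive unweighted bound can fail for small $N$ or long graph diameters. The decisive step is the coordinated choice of $\delta$ and $\xi$, together with (if necessary) a metric weighted by $\xi^{d(i,\cdot)}$, which exploits the bounded second-derivative interval in~\eqref{eqetapr} to make the contraction factor depend only on $\rho<1$ and not on the graph. Once $\beta$ is shown to be a contraction the rest follows mechanically: Banach's fixed-point theorem reconfirms uniqueness of NE, Fact~\ref{factabr} delivers convergence of every ABR dynamic to that NE, and Proposition~\ref{thmFIpr} identifies it with the efficient allocation $x^*$.
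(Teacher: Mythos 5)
Your overall architecture matches the paper's: extract the best response from the first-order conditions, bound the row-sum norm of its Jacobian, exhibit $\partial \breve{y}_i/\partial q_{n(i,i)}^i<0$ for non-supermodularity, and finish via Fact~\ref{factabr} and Proposition~\ref{thmFIpr}. Your derivative formulas are also correct and agree with~\eqref{eqBRyderpr}. The gap is in the one step that carries the whole proof: the joint tuning of $\delta$ and $\xi$. Your choice $\delta=(N-1)(\eta+1/\eta)/2$ centers $1/(\delta v_i'')$ at $-1/(N-1)$, so each of the $N-1$ entries of the $\breve{y}_i$ row has absolute value up to $\tfrac{\rho}{N-1}\xi^{-(d(i,r)-1)}$ and the $q_{n(i,i)}^i$ entry has absolute value up to $\tfrac{1+\rho}{(N-1)\xi}$ (attained when $v_i''\equiv -1/\eta$, an admissible utility). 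Summing absolute values gives a row sum of at least $\rho+\tfrac{1+\rho}{N-1}$ even as $\xi\to 1$, which exceeds $1$ whenever $\rho>\tfrac{N-2}{N}$ --- for instance already at the paper's own numerical instance $N=31$, $\eta=25$ (where $\rho\approx 0.997$ and the bound is $\approx 1.06$). Your fallback of a weighted row-sum norm with weights $\xi^{d(i,\cdot)}$ only neutralizes the $\xi^{-(d-1)}$ distance factors; it cannot repair this, because the $\breve{q}_i^w$ rows have row sums tending to $\xi\to 1$ and leave no slack to trade against the $\breve{y}_i$ rows.

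The mechanism the paper actually exploits is sign coherence rather than smallness of individual entries. Taking $\delta$ large enough that $\eta<\delta/(N-1)$ (condition~\eqref{eqetaC1pr}) forces every entry $\tfrac{1}{N-1}+\tfrac{1}{\delta v_i''}$ to be positive, so the sum of absolute values equals the signed sum $\tfrac{1}{\delta\xi|v_i''|}+\bigl(1-\tfrac{N-1}{\delta|v_i''|}\bigr)\tfrac{S_i}{N-1}$ with $S_i=\sum_{r\ne i}\xi^{-(d(i,r)-1)}$; as $\xi\to 1$ this tends to $1-\tfrac{N-2}{\delta|v_i''|}<1$, because the $N-1$ negative contributions $-\tfrac{1}{\delta|v_i''|}$ hidden inside the allocation-plus-price entries overcompensate the single positive $+\tfrac{1}{\delta|v_i''|}$ coming from the price-only column $q_{n(i,i)}^i$. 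Your centering destroys exactly this cancellation by letting the entries change sign. Note also that the resulting margin $\tfrac{N-2}{\delta\eta}$ (and the final balancing $\delta=(N-1)\sqrt{\min_i|C_i/D_i|}$, $\eta^2<\min_i|C_i/D_i|$) depends on $N$ and, through $C_i,D_i$, on the graph --- so your claim that the contraction factor ``depends only on $\rho$ and not on the graph'' is a further sign that the estimate has gone wrong. The non-supermodularity argument and the concluding appeals to Banach, Fact~\ref{factabr} and Proposition~\ref{thmFIpr} are fine.
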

\begin{proof}
	Please see Appendix~\ref{proofcontracpr}.
\end{proof}
	
The intuition behind the proof of Proposition~\ref{thmcontracpr} can be motivated as follows. By selecting parameter $ \xi < 1 $, contraction of best-response for the variables $ \left( \tilde{q}_i^r \right)_{r \ne i} $ is already ensured. However, for best-response in $ \tilde{y}_i,\tilde{q}_i^i $, tuning of $ \xi,\delta $ is needed. Due to the specific nature of the mechanism it turns out that in order to accommodate any given value of $ \eta > 1 $, $ \xi $ needs to be selected close to $ 1 $ and correspondingly $ \delta $ needs to be selected as a function of the chosen $ \xi $. Finally, it is also shown that the best-response $ \tilde{y}_i $ and $ \tilde{q}_i^i $ are decreasing in $ q_{n(i,i)}^i $ and hence the game is not supermodular.

\paragraph*{Generalizing to $ K>1 $} The proof of Proposition~\ref{thmcontracpr} relies on inverting $ v_i^\prime:\mathbb{R} \rightarrow \mathbb{R} $ and bounding it appropriately. For the general case this is the same as inverting $ \nabla v_i:\mathbb{R}^K \rightarrow \mathbb{R}^K $. Strict concavity of $ v_i(\cdot) $ ensures that the determinant of Hessian of $ v_i $ is never zero, hence an inverse for $ \nabla v_i $ exists in the general case. Of course, to bound the derivatives of $ ( \nabla v_i )^{-1} $, instead of using the condition in~\eqref{eqetapr} we use the more general condition from~\eqref{eqetagenpr}. Finally, the proof is completed by tuning the parameters $ \xi,\delta $ in exactly the same manner as in the proof of Proposition~\ref{thmcontracpr}.
	
	%\todo{Add comments about general $ K $}

\begin{theorem}[Contraction]\label{thmcontracpub}
	The game $ \mathfrak{G}_{pub} $ defined in Section~\ref{secmechpub} is both contractive and supermodular. Thus, all learning dynamics within the ABR dynamics class converge to the unique Nash equilibrium.
\end{theorem}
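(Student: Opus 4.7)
The plan is to establish both properties directly from the Jacobian of the best-response $ \beta $ and then invoke Fact~\ref{factabr}. By Lemma~\ref{thmconcpub}, $ \beta_i(m_{-i}) $ is single-valued and pinned down by the first order conditions of $ u_i $, so the entire analysis reduces to implicit-differentiating those FOCs and bounding the row-sums of $ \nabla \beta $ in the row-sum norm, uniformly over all $ v_i $ satisfying~\eqref{eqetagenpr}.

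I would split the components of $ m_i $ into three groups: (i) $ q_i^r $ for $ r \in \mN(i) $, (ii) $ q_i^r $ for $ r \notin \mN(i) \cup \{i\} $, and (iii) the coupled pair $ (y_i,q_i^i) $. For groups (i) and (ii) only a single quadratic term of~\eqref{eqtaxpub_a} involves the variable, giving the closed-form best-responses $ \widetilde{q}_i^r = \xi\, y_r $ and $ \widetilde{q}_i^r = \xi\, q_{n(i,r)}^r $ respectively. Each such row of $ \nabla\beta $ has exactly one non-zero entry equal to $ \xi \in (0,1) $, which is simultaneously $ <1 $ in absolute value (good for contraction) and non-negative (good for supermodularity). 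The FOC for $ q_i^i $ gives $ \widetilde{q}_i^i = \xi\, \widetilde{y}_i $, so its row is a scaled copy of the $ \widetilde{y}_i $ row and inherits both properties from it.

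The main obstacle is therefore the $ \widetilde{y}_i $ row. Eliminating $ q_i^i $ from the coupled FOCs yields an implicit equation of the form
\begin{equation*}
  v_i^\prime\bigl(\widehat{x}_i\bigr) = \widehat{p}_i(m_{-i}) - N\delta\xi\bigl(q_{n(i,i)}^i - \xi\,\widetilde{y}_i\bigr),
\end{equation*}
where $ \widehat{x}_i $ is affine in $ \widetilde{y}_i $ with slope $ 1/N $ and depends on $ m_{-i} $ through the $ q $-sum $ A_i(m_{-i}) $ visible in~\eqref{eqallopub}. Implicit differentiation at a given $ m_{-i} $ produces, for every coordinate $ z $ of $ m_{-i} $, an expression of the form $ \partial \widetilde{y}_i / \partial z = N_z / \bigl( v_i^{\prime\prime}(\widehat{x}_i)/N + N\delta \xi^2 \bigr) $, whose numerator $ N_z $ is a linear combination of the readily-computed partials of $ \widehat{p}_i $ and $ A_i $ from~\eqref{eqallopub}--\eqref{eqtaxpub_b}, each of which is either zero or of the form $ \pm\,\xi^{-(d(i,r)-1)} $ (possibly scaled by $ \delta(N-1) $). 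Using the range $ v_i^{\prime\prime}(\widehat{x}_i) \in (-\eta,-1/\eta) $ from~\eqref{eqetagenpr}, the sum of absolute values over $ z $ reduces to a rational function of $ (\xi,\delta,\eta) $; I would then select $ \xi \in (0,1) $ close enough to $ 1 $ that the factors $ \xi^{-(d(i,r)-1)} $ are controlled, and subsequently pick $ \delta $ so that the denominator dominates, exactly as in the two-step tuning used in the proof of Proposition~\ref{thmcontracpr}. This drives the row-sum strictly below $ 1 $, uniformly over admissible $ v_i $, and hence $ \beta $ is a row-sum-norm contraction.

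Supermodularity then reduces to a sign check on each $ N_z $. The key structural difference from $ \mathfrak{G}_{pvt} $ is the sign pattern in~\eqref{eqtaxpub_b}: $ \widehat{p}_i $ now carries a \emph{positive} coefficient on $ q_{n(i,i)}^i $ and \emph{negative} coefficients on the remaining $ q $'s, which mirrors the signs of the corresponding entries of $ A_i $ in~\eqref{eqallopub} once these are weighted by $ v_i^{\prime\prime} < 0 $. With the $ (\xi,\delta) $ chosen above, each $ N_z $ shares its sign with the denominator, so every entry of the $ \widetilde{y}_i $ row is non-negative; together with the non-negative rows of groups (i), (ii) and the scaled copy in (iii), this gives that $ \beta $ is coordinate-wise non-decreasing, i.e.\ $ \mathfrak{G}_{pub} $ is supermodular. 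Fact~\ref{factabr} then delivers convergence of every ABR dynamic to the unique Nash equilibrium, and the generalization to $ P>1 $ follows the same feature-wise decoupling argument used for $ K>1 $ in Section~\ref{secmechpr}.
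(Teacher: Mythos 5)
Your proposal follows essentially the same route as the paper's proof: it bounds the row-sum norm of the best-response Jacobian, dispatches the $q$-rows via the closed-form $\xi$-scaled responses (with $\widetilde{q}_i^i=\xi\widetilde{y}_i$ inheriting from the $\widetilde{y}_i$ row), implicitly differentiates the reduced first-order condition for $\widetilde{y}_i$, tunes $\xi$ toward $1$ and then sets $\delta$ as a function of $\xi$ to accommodate arbitrary $\eta>1$, and reads off supermodularity from the signs of the resulting Jacobian entries. The only slip is the sign of the curvature term in your implicit-differentiation denominator, which should be $N\delta\xi^2 - v_i^{\prime\prime}(\widehat{x}_i)/N$ (always positive since $v_i^{\prime\prime}<0$) rather than $v_i^{\prime\prime}(\widehat{x}_i)/N + N\delta\xi^2$; this does not affect the structure of the argument.
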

\begin{proof}
	Please see Appendix~\ref{proofcontracpub}.
\end{proof}
	
Initially the parameters have the following bound: $ \xi \in (0,1) $ and $ \delta > 0 $. In order to get contraction, a further restriction $ \xi \in \left( \sqrt{\left. (N-1) \middle/ N \right.},1 \right) $ needs to be imposed. This also gives that each best-response is non-decreasing in the message of every other agent and thus the game is supermodular. Finally, here too in order to accommodate any value of $ \eta > 1 $ the final tuning of $ \xi $ requires it to be chosen very close to $ 1 $ and consequently $ \delta $ is selected as a function of the chosen $ \xi $.
	
Generalizing to $ P > 1 $, the essential idea of inverting $ \nabla v_i $ from above works here too and from there onwards the steps follow analogously to the ones in the proof above.
	
\subsection{Numerical Analysis of convergence}

For numerical analysis we consider the private goods problem with one good $ (K=1) $ and the public goods problem with one feature $ (P=1) $ and $ N = 31 $, $ \eta = 25 $. The agents' utility function as quadratic, i.e.,
\begin{align}
	v_i(x_i) &= \theta_i x_i^2 + \sigma_i x_i, 
	\tag{Private}
	\\
	v_i(x) &= \theta_i x^2 + \sigma_i x.
	\tag{Public} 
\end{align}
An example of quadratic utility function can be found in~\cite{samadi2012}, for the model of demand side management in smart-grids. In each case the second derivative of $ v_i(\cdot) $ is $ 2\theta_i $ and thus for any agent $ i $ the value for $ \theta_i $ is chosen uniformly randomly in the range $ (-\frac{\eta}{2},-\frac{1}{2\eta}) $. As the model doesn't impose any restriction on the first derivative, the value for $ \sigma_i $ is chosen uniformly randomly in the range $ (10,20) $. From the proof of Propositions~\ref{thmcontracpr} and~\ref{thmcontracpub}, one can numerically calculate the value of parameters $ \xi,\delta $. For the particular instance of the random $ \theta,\sigma $ generated to be used for the plots below (the same values for $ \theta,\sigma $ are used for both public and private goods examples), the parameter values are listed in Table~\ref{table1}. The two cases for graph $ \mathcal{G} $ considered are: a full binary tree and a sample of the Erd\H{o}s-Re\'{n}yi random graph with only one connected component, where any two edges are connected with probability $ p = 0.3 $. The first represents a case of small average degree whereas the second represents the case of large average degree. The same instance of Erd\H{o}s-Re\'{n}yi random graph is used for both private and public goods examples.

	\begin{comment} 
	\begin{table} 
	\centering 
	\begin{tabular}{|c|c|c|} \hline
	& Private goods & Public goods \\ \hline 
	Binary Tree & $\xi = 1 - (1.831\times 10^{-4}) $, $ \delta = 1005.6 $ & $ \xi = 1 - (2.515 \times 10^{-4}) $, $ \delta = 0.9505 $ \\ \hline
	Erd\H{o}s-Re\'{n}yi & $\xi = 1 - (7.324 \times 10^{-4}) $, $ \delta = 932.3 $ & $ \xi = 1 - (0.001) $, $ \delta = 0.8744 $ \\ \hline  
	\end{tabular}
	\caption{$ (\xi,\delta) $ parameter values for different graphs and problems.}
	%\label{table1}
	\end{table} 
	\end{comment} 
	
\begin{table} 
	\centering 
	\begin{tabular}{|c|l|l|} \hline
			& \multicolumn{1}{|c|}{Private goods} & \multicolumn{1}{|c|}{Public goods} 
			\\ \hline 
			Binary Tree & \begin{tabular}{l}
				$\xi = 1 - (1.831\times 10^{-4}) $, \\ 
				$ \delta = 1005.6 $.
			\end{tabular} & \begin{tabular}{l}
				$ \xi = 1 - (2.515 \times 10^{-4}) $, \\ 
				$ \delta = 0.9505 $.
			\end{tabular} \\ \hline
			Erd\H{o}s-Re\'{n}yi & \begin{tabular}{l}
				$\xi = 1 - (7.324 \times 10^{-4}) $, \\ 
				$ \delta = 932.3 $.
			\end{tabular} & \begin{tabular}{l}
				$ \xi = 1 - (10^{-3}) $, \\ 
				$ \delta = 0.8744 $.
			\end{tabular} \\ \hline  
	\end{tabular}
	\caption{$ (\xi,\delta) $ parameter values for different graphs and problems.}
	\label{table1}
\end{table}

Since it has been shown that the best-response is a contraction mapping, one expects that any learning strategy that best-responds to some convex combination of past actions from finitely many rounds, converges at an exponential rate. Indeed, this is exactly observed from Fig.~\ref{figplotpr} and~\ref{figplotpub}, where the absolute distance $ \Vert m_n - \widetilde{m} \Vert_2 $ of the action in round $ n $ to the Nash equilibrium $ \widetilde{m} $ of the game $ \mathfrak{G}_{pvt} $ and $ \mathfrak{G}_{pub} $ is plotted versus $ n $, respectively. For the learning dynamic\footnote{A theoretical simplification with quadratic utilities $ v_i(\cdot) $ is that for learning strategies such as $ k- $period best-response or Fictitious Play, instead of maintaining the empirical distribution over other agents' past actions, every agent can simply maintain the empirical average.}, one case considered is when the action taken by any agent $ i $ at time $ n $ is the best-response to an exponentially weighed average of past actions, i.e.,
\begin{subequations}
\begin{align}
	m_{n} &= \beta\left( \frac{m_{n-1}}{2}  + \frac{r_{n-1}}{2}  \right),
		%\quad \forall~i \in \mN,
	\\
	r_{n} &= \frac{m_n}{2} + \frac{r_{n-1}}{2}.
	\end{align}   
\end{subequations} 
The second learning dynamic is to best-respond to the average of past 10 rounds.

\begin{figure}
	\centering 
	\includegraphics[width=\textwidth]{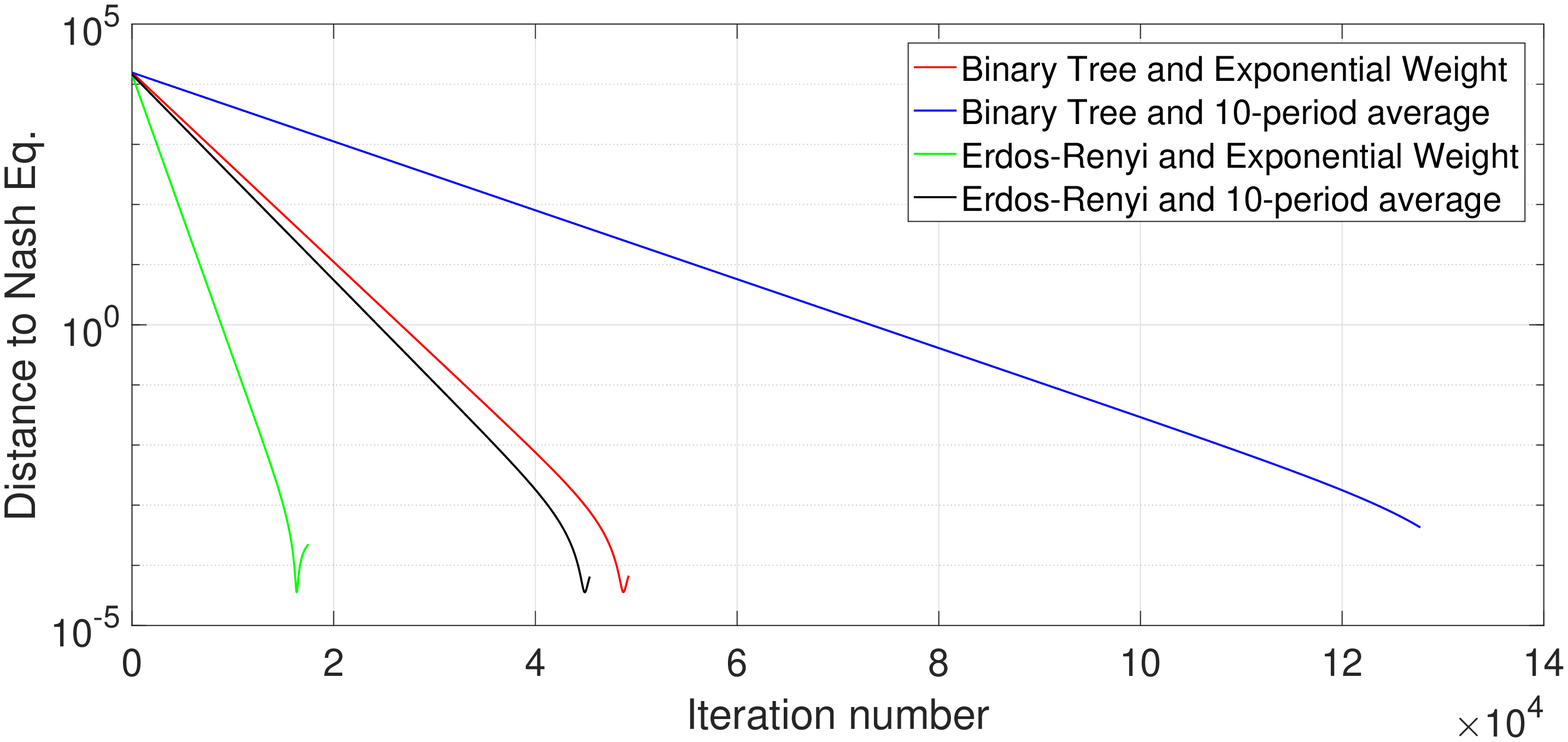}
	\caption{$ \Vert m_n - \widetilde{m} \Vert_2 $ vs. $ n $ for the private goods mechanism.}
	\label{figplotpr}
\end{figure}
	
\begin{figure}
	\centering 
	\includegraphics[width=\textwidth]{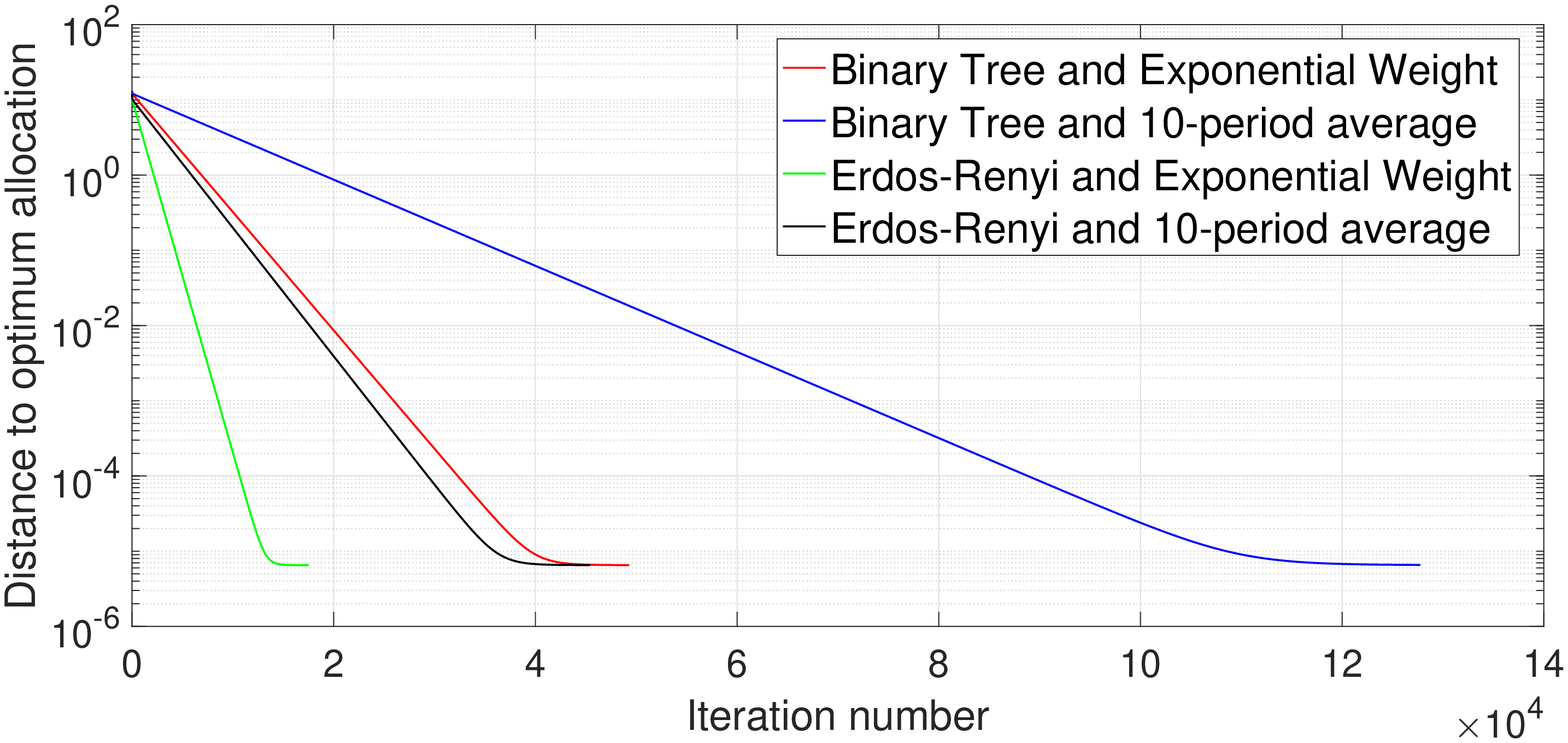}
	\caption{$ \Vert \widehat{x}(m_n) - x^* \Vert_2 $ vs. $ n $ for the private goods mechanism.}
	\label{figplotpr_xdist}	
\end{figure}
	
\begin{figure}
	\centering 
	\includegraphics[width=\textwidth]{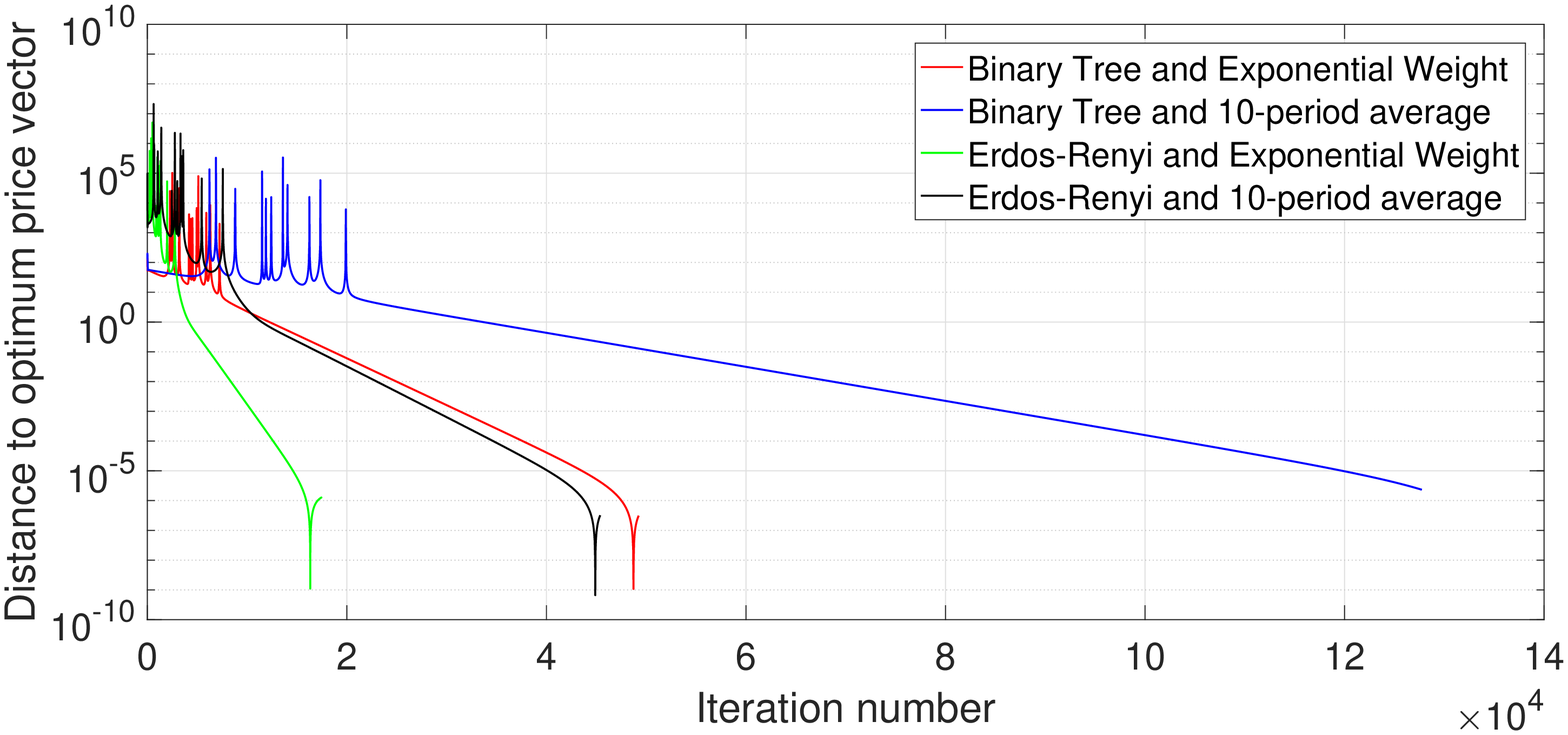}
	\caption{$ \Vert \left( \widehat{p}_i(m_n) \right)_{i \in \mN} - (\lambda_1^*,\ldots,\lambda_1^*) \Vert_2 $ vs. $ n $ for the private goods mechanism.}
	\label{figplotpr_pdist}
\end{figure}
	
\begin{figure}
	\centering 
	\includegraphics[width=\textwidth]{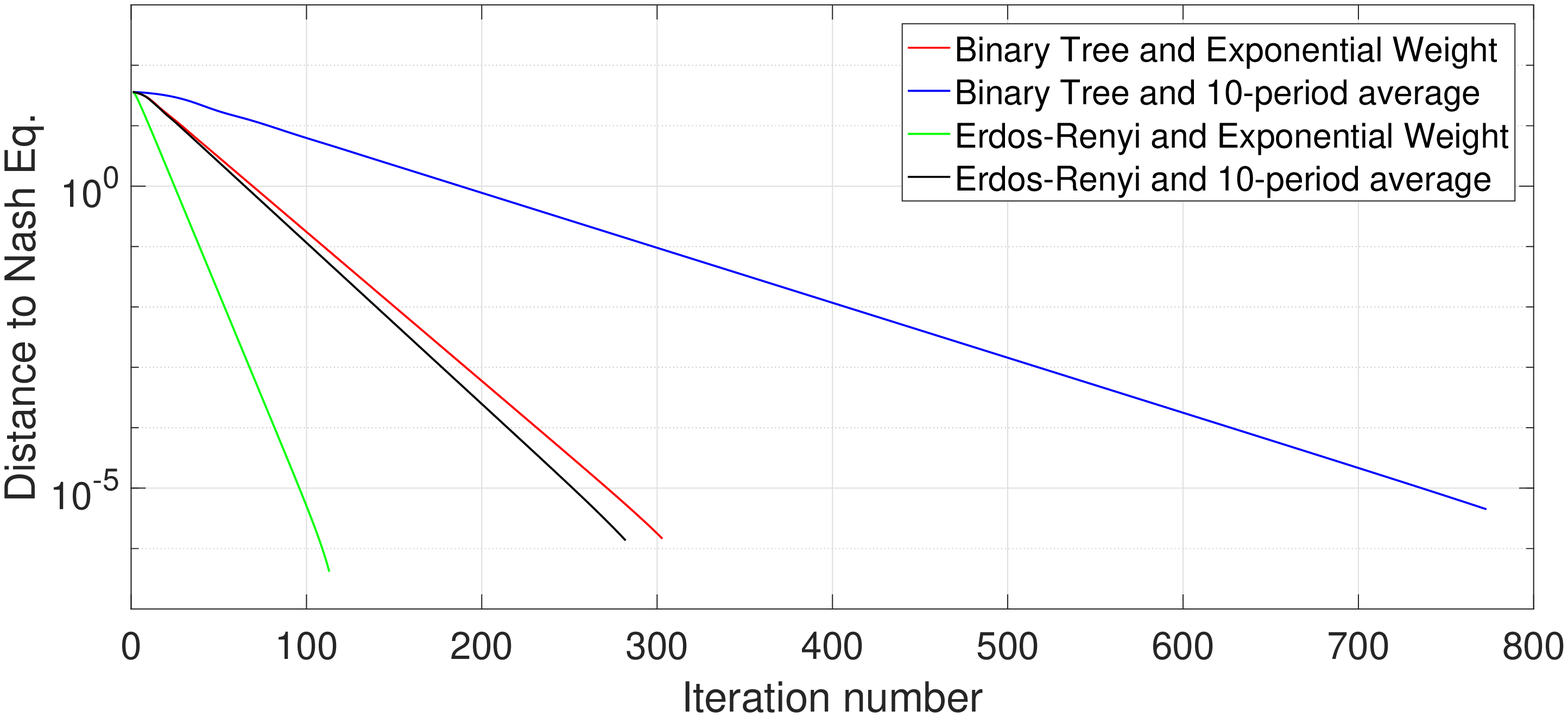}
	\caption{$ \Vert m_n - \widetilde{m} \Vert_2 $ vs. $ n $ for the public goods mechanism.}
	\label{figplotpub}
\end{figure}

Since the learning iterations are essentially conducting information exchange, one expects that the convergence to be faster for a graph that is more connected. From both the figures this is evident, as for each learning dynamic higher average degree Erd\H{o}s-Re\'{n}yi random graph shows faster convergence than lower average degree Binary Tree. In fact, for both learning dynamics the convergence for the Erd\H{o}s-Re\'{n}yi random graph is faster than either learning dynamic for Binary Tree. Comparing the two learning dynamics among themselves, we observe that the more aggressive exponential weighing leads to faster convergence compared to the learning dynamic that puts equal weight on each of the previous 10 actions. Finally, for Fig.~\ref{figplotpr}, in each case the relative distance to Nash equilibrium, defined as $ \left. \Vert m_n - \widetilde{m} \Vert_2 \middle/ \Vert \widetilde{m} \Vert_1 \right. $, is in the order of $ 10^{-9} $ when the absolute distance to Nash equilibrium is $ 10^{-3} $. For Fig.~\ref{figplotpub}, the relative distance in each case is of the order of $ 10^{-8} $ when the absolute distance to Nash equilibrium is $ 10^{-5} $.

Fig.~\ref{figplotpr_xdist} and~\ref{figplotpr_pdist} refer to the private goods mechanism and plot the distance of allocation $ \widehat{x}(m_n) $ and price $ \left( p_1(m_n),\ldots,p_N(m_n) \right) $ in round $ n $ to $ x^* $ and $ \left( \lambda_1^*,\ldots,\lambda_1^* \right) $, respectively. The convergence pattern is the same as the one observed in Fig.~\ref{figplotpr}, for the distance of message $ m_n $ to Nash equilibrium $ \widetilde{m} $.

\section{Conclusion}

In this paper, we present a distributed mechanism where agents only need to exchange messages locally with their neighbors. While for models with non-strategic agents, extensive research has been done in the field of distributed learning and optimization, this is not the case with mechanism design where agents are fully strategic.  
	%In fact, all well-known mechanisms such as VCG, Myerson Optimal Auctions, assume a broadcast structure. 
	%
For every profile $ \left( v_i(\cdot) \right)_{i \in \mN} $ of utility functions, the induced game is shown to have a unique NE. The allocation at equilibrium is efficient and taxes are budget balanced. Then we establish informational robustness of the mechanism by showing that the best-response in the induced game is a contraction mapping. This establishes that every learning dynamic within the ABR dynamics class converges to the unique and efficient NE when the game is played repeatedly. The ABR class contains learning dynamics such as Cournot best-response, $ k- $period best-response and Fictitious Play. 
	
\subsubsection*{Future Work}
	
A significant scope for improvement in the presented mechanism is the reduction of the size of the message space. 
	%The mechanism requires each agent to maintain a demand $ y_i $ and a duplicate $ q_i \in \mathbb{R}^N $ of the overall global demand $ y = (y_1,\ldots,y_N) $. 
A more scalable mechanism would be one where on average each agent's message space is of dimension $ o(N) $. However, such an attempt might possibly require restrictions on either the underlying graph $ \mathcal{G} $ or the upper bound on $ \eta $ that is admissible under the model. For the presented mechanism, the only restriction on the graph is that it is connected and there is no upper bound on $ \eta $. 
Another improvement can be that of considering more complicated constraint sets for the optimization~\eqref{eqcppr}. However, as can be seen from previous attempts at mechanism design for general constraint sets,~\cite{johari2009efficiency,rahuljain,demos,SiAn_multicast_tcns,SiAn14b}, such an extension is not straightforward. 
	%Further, the two issues of guaranteeing consensus on a graph and the property of contraction, makes this generalization non-trivial.  
	%
Finally, for preventing inter-temporal exchange of money during the learning phase, one can adjust taxes such that there is budget balance for all messages, rather than just at NE.

\bibliographystyle{unsrtnat}%{IEEEtran}%{ieeetr}
{%\small
	\bibliography{abhinav}%,achilleas14abrv,IEEEabrv}
}

\appendix
	
\section{Proof of Lemma~\ref{thmconcpr} (Concavity - Private goods)}
\label{proofconcpr}

\begin{proof}
Since the allocation and tax functions are smooth and $ v_i(\cdot) $ is continuously double-differentiable, to establish concavity we show that the Hessian of $ u_i(m) $ w.r.t. $ m_i $ is negative definite i.e., $ H \prec 0 $. Once this is established, the optimization in~\eqref{eqBRdefpr} has a strictly concave objective and an unbounded constraint set. Thus it has a unique maximizer, defined by the first order derivative conditions.
		
The Hessian is of size $ (N+1) \times (N+1) $ and we have
\begin{subequations}	
\begin{align}
	H_{11} &= \frac{\partial^2 u_i(m)}{\partial y_i^2} = v_i^{\prime\prime}(\widehat{x}_i(m)) - 2\xi^2, \\
	H_{(j+1)1} = H_{1(j+1)} &= \frac{\partial^2 u_i(m)}{\partial y_i \partial q_i^j} =
	\left\{
	\begin{array}{ll}
	0  & \mbox{for } j \in \mN,~j\ne i, \\
	2\xi & \mbox{for } j=i,
	\end{array}
	\right. 
	\\
	H_{(j+1)(j+1)} &= \frac{\partial^2 u_i(m)}{\partial (q_i^j)^2} = -2, \quad \forall~ j \in \mN,\\	
	H_{(r+1)(j+1)} &= \frac{\partial^2 u_i(m)}{\partial q_i^r \partial q_i^j} = 0 , \quad \forall~ j,r \in \mN,~j \ne r.
\end{align}
\end{subequations}
The characteristic equation, $ \textsf{Det}\left(H - xI\right) = 0 $, becomes
\begin{equation}
	\left(x+2\right)^{N-1} \Big( (x+2)(x-H_{11}) - 4\xi^2 \Big) = 0.
\end{equation}
This implies that $ N-1 $ eigenvalues of $ H $ are $ -2 $ and the remaining two eigenvalues satisfy $ x^2 + (2-H_{11})x - 2v_i^{\prime\prime}(\widehat{x}_i(m)) = 0 $. Since $ H $ is a symmetric matrix, all its eigenvalues are real. Due to $ v_i^{\prime\prime}(\cdot) < 0 $, the product of roots in the above quadratic equation is positive and the sum of roots is negative. This gives that the remaining two eigenvalues of $ H $ are also negative.
\end{proof}

\section{Proof of Proposition~\ref{thmFIpr} (Full Implementation - Private goods)}
\label{proofFIpr}
	
\begin{proof} 
For the private goods problem in~\eqref{eqcppr}, the optimality conditions in~\eqref{eqKKTpr} are sufficient. Thus in order to prove that the corresponding allocation at Nash equilibrium is efficient, we show that at any Nash equilibrium $ {\overline{m}} = (\overline{y},\overline{q}) \in \mathcal{M} $, the allocation $ \big(\widehat{x}_i(\overline{m}) \big)_{i\in \mN} $ and prices $ \big(\widehat{p}_i(\overline{m})\big)_{i\in \mN} $ satisfy the optimality conditions as $ x^* $ and $\lambda_1^* $, respectively. Then using an invertibility argument we show existence and uniqueness of Nash equilibrium.
		
Using Proposition~\ref{thmconcpr}, at any Nash equilibrium $ \overline{m} $ we have: $ { \nabla_{m_i} } u_i(\overline{m}) = 0 $, $ \forall~i \in \mN $. This gives
\begin{subequations}
\begin{alignat}{2}
	\frac{\partial v_i(\widehat{x}_i(\overline{m}))}{\partial y_i} - \frac{\partial \widehat{t}_i(\overline{m})}{\partial y_i} &= 0, \quad &&\forall~i \in \mN, 
	\\
	\frac{\partial v_i(\widehat{x}_i(\overline{m}))}{\partial q_i^r} - \frac{\partial \widehat{t}_i(\overline{m})}{\partial q_i^r} &= 0, \quad &&\forall~r \in \mN,~i\in \mN.	
\end{alignat}
\end{subequations}
Using the definitions in~\eqref{eqallopr} and \eqref{eqtaxpr}, this becomes
\begin{subequations} \label{eqNE1pr}
\begin{gather}\label{eqNE1pr_a}
	v_i^\prime(\widehat{x}_i(\overline{m})) - \widehat{p}_i(\overline{m}_{-i}) + 2\xi(\overline{q}_i^i - \xi \overline{y}_i) = 0, \quad \forall~i \in \mN, 
	\\
	\label{eqNE1pr_b}
	\overline{q}_i^r =
	\left\{
	\begin{array}{ll}
	\xi \overline{y}_i  & \mbox{for } r = i, \\
	\xi \overline{y}_r  & \mbox{for } r \in \mN(i), \\
	\xi \overline{q}_{n(i,r)}^r & \mbox{for } r \notin  \mN(i) \text{ and } r \ne i,
	\end{array}
	\right. \quad \forall~i \in \mN. 
\end{gather}
\end{subequations}
		
For any distinct pair of vertexes $ i,r $, denote by $ \{ i, i_1,i_2,\ldots,i_{d(i,r)} = r \}$ the ordered vertexes in the shortest path between $ i $ and $ r $, where $ i_1 = n(i,r) \in \mN(i) $. Since the shortest path between $ i $ and $ r $ contains the shortest path between $ i_k $ and $ r $, for any $ k < d(i,r) $, we have $ n(i_k,r) = i_{k+1} $. Using the third sub-equation in~\eqref{eqNE1pr_b} repeatedly, replacing $ i $ by $ i_k $ gives,
\begin{equation}
	\overline{q}_i^r = \xi \, \overline{q}_{i_1}^r = \xi^2 \, \overline{q}_{i_2}^r = \cdots =  \xi^{d(i,r)-1} \, \overline{q}_{i_{d(i,r)-1}}^r. 	
\end{equation}
Now using the second sub-equation of~\eqref{eqNE1pr_b}, replacing $ i $ by $ i_{d(i,r)-1} $ and noting $ r \in \mN(i_{d(i,r)-1}) $, gives $ \overline{q}_{i_{d(i,r)-1}}^r = \xi \, \overline{y}_{r} $. This combined with the above equation gives that~\eqref{eqNE1pr_b} implies 
\begin{equation}\label{eqlempr}
	\overline{q}_i^r =
	\left\{
	\begin{array}{ll}
	\xi \overline{y}_i  & \mbox{\textup{for} } r = i, \\
	\xi^{d(i,r)} \overline{y}_r  & \mbox{\textup{for} } r \ne i,	
	\end{array}
	\right. \quad \forall~i \in \mN. 
\end{equation}
		
Using the above and then combining~\eqref{eqNE1pr_a} with~\eqref{eqallopr} and~\eqref{eqtaxpr_b} gives, $ \forall~i\in \mN $, 
\begin{subequations}
\begin{align} \label{eqNE2pr_a}
	v_i^\prime(\widehat{x}_i(\overline{m})) &= \widehat{p}_i(\overline{m}_{-i}), 
	%\quad &&\forall~i\in \mN,
	\\ 
	\label{eqNE2pr_b}
	\widehat{x}_i(\overline{m}) &= \overline{y}_i - \frac{1}{N-1}\sum_{j \ne i} \overline{y}_j, 
	%\quad &&\forall~i\in \mN,
	\\ 
	\label{eqNE2pr_c}
	\widehat{p}_i(\overline{m}_{-i}) &= \frac{1}{\delta} \sum_{j \in \mN} \overline{y}_j. 
	%\quad &&\forall~i\in \mN.
\end{align}
\end{subequations}
\eqref{eqNE2pr_b} implies $ \sum_{i \in \mN} \widehat{x}_i(\overline{m}) = 0 $ and combining~\eqref{eqNE2pr_a} and~\eqref{eqNE2pr_c} gives $ v_i^\prime(\widehat{x}_i(\overline{m})) = \frac{1}{\delta} \sum_{j \in \mN} \overline{y}_j $, $ \forall~i\in \mN $. Thus, the allocation-price pair
\begin{equation}
	\left(\big(\overline{y}_i - \frac{1}{N-1}\sum_{j \ne i} \overline{y}_j\big)_{i \in \mN} \, , \, \frac{1}{\delta} \sum_{j \in \mN} \overline{y}_j\right)
\end{equation}
satisfy the optimality conditions,~\eqref{eqKKTpr}, as $ (x^*,\lambda_1^*) $. Since the optimality conditions are sufficient, the allocation at any Nash equilibrium $ \overline{m} $ is the efficient allocation $ x^* $.
		
For existence and uniqueness, consider the following set of linear equations that must be satisfied at any Nash equilibrium $ \overline{m} $,
\begin{subequations}
\begin{align}
	x_i^* &= \overline{y}_i - \frac{1}{N-1}\sum_{j \ne i} \overline{y}_j, \quad \forall~ i \in \mN, 
	\\
	\lambda_1^* &= \frac{1}{\delta} \sum_{j \in \mN} \overline{y}_j.
\end{align}
\end{subequations}
Here $ \left(\overline{y}_j \right)_{j\in\mN} $ are the variables and $ (x^*,\lambda_1^*) $ are fixed - since they are uniquely defined by the optimization~\eqref{eqcppr}. The above equations can be inverted to give the unique solution as,
\begin{equation}
	\overline{y}_i = \frac{N-1}{N} x_i^* + \frac{\delta \lambda_1^*}{N}, \quad \forall~i\in\mN. 
\end{equation}
Furthermore, using above and~\eqref{eqlempr}, the values for $ \big( \overline{q}_i^r \big)_{i,r\in\mN} $ can also be calculated uniquely. Since a solution for $ \overline{m} = (\overline{y},\overline{q}) $ in terms of $ x^*,\lambda_1^* $ exists, existence of Nash equilibrium is guaranteed. Also, since this solution is unique, there is a unique Nash equilibrium.
		
For Budget Balance, we have the following. From the above characterization, at Nash Equilibrium $ \widetilde{m} $ all tax terms from~\eqref{eqtaxpr_a}, other than $ \widehat{p}_i(\widetilde{m}_{-i}) \left(\widehat{x}_i(\widetilde{m}) - \dfrac{c_1}{N} \right) $, are zero. Furthermore, the prices are all equal to $ \lambda_1^* $ and allocations are equal to $ x_i^* $. Thus,
\begin{equation}
	\sum_{i \in \mN} \widehat{t}_i(\widetilde{m}) = \sum_{i \in \mN} \lambda_1^* \left(\widehat{x}_i(\widetilde{m}) - \dfrac{c_1}{N} \right) = \lambda_1^* \left( \sum_{i \in \mN} x_i^* - c_1 \right) = \lambda_1^* \cdot 0 = 0,
\end{equation}
since the efficient allocation $ x^* $ satisfies the constraint $ \sum_{i \in \mN} x_i^* = c_1 $.
\end{proof}

\section{Proof of Lemma~\ref{thmconcpub} (Concavity - Public Goods)}
\label{proofconcpub}
	
\begin{proof}
Since the allocation and tax functions are smooth and $ v_i(\cdot) $ is continuously double-differentiable, to establish concavity we show that the Hessian of $ u_i(m) $ w.r.t. $ m_i $ is negative definite i.e., $ H \prec 0 $. Once this is established, the optimization in~\eqref{eqBRdefpr} has a strictly concave objective and an unbounded constraint set. Thus it has a unique maximizer, defined by the first order derivative conditions.
		
The Hessian is of size $ (N+1) \times (N+1) $ and we have
\begin{subequations}	
\begin{align}
	H_{11} &= \frac{\partial^2 u_i(m)}{\partial y_i^2} = \dfrac{v_i^{\prime\prime}(\widehat{x}_i(m))}{N^2} - (2+\delta)\xi^2, 
	\\
	H_{(j+1)1} = H_{1(j+1)} &= \frac{\partial^2 u_i(m)}{\partial y_i \partial q_i^j} =
	\left\{
	\begin{array}{ll}
	0  & \mbox{for } j \in \mN,~j\ne i, \\
	2\xi & \mbox{for } j=i,
	\end{array}
	\right. 
	\\
	H_{(j+1)(j+1)} &= \frac{\partial^2 u_i(m)}{\partial (q_i^j)^2} = -2, \quad \forall~ j \in \mN,
	\\	
	H_{(r+1)(j+1)} &= \frac{\partial^2 u_i(m)}{\partial q_i^r \partial q_i^j} = 0 , \quad \forall~ j,r \in \mN,~j \ne r.
\end{align}
\end{subequations}
The characteristic equation, $ \textsf{Det}\left(H - xI\right) = 0 $, becomes
\begin{equation}
	\left(x+2\right)^{N-1} \Big( (x+2)(x-H_{11}) - 4\xi^2 \Big) = 0.
\end{equation}
This implies that $ N-1 $ eigenvalues of $ H $ are $ -2 $ and the remaining two eigenvalues satisfy $ x^2 + (2-H_{11})x + 2\delta\xi^2 - \frac{2}{N^2}v_i^{\prime\prime}(\widehat{x}_i(m)) = 0 $. Since $ H $ is a symmetric matrix, all its eigenvalues are real. Due to $ v_i^{\prime\prime}(\cdot) < 0 $, the product of roots in the above quadratic equation is positive and the sum of roots is negative. This gives that the remaining two eigenvalues of $ H $ are also negative.
\end{proof}

\section{Proof of Proposition~\ref{thmFIpub} (Full Implementation - Public goods)}
\label{proofFIpub}
	
\begin{proof}
For the public goods problem in~\eqref{eqcppub}, the optimality conditions in~\eqref{eqKKTpub} are sufficient. Thus in order to prove that the corresponding allocation at Nash equilibrium is efficient, we show that at any Nash equilibrium $ {\overline{m}} = (\overline{y},\overline{q}) \in \mathcal{M} $, the allocation $ \big(\widehat{x}_i(\overline{m}) \big)_{i\in \mN} $ and prices $ \big(\widehat{p}_i(\overline{m})\big)_{i\in \mN} $ satisfy the optimality conditions as $ x^* $ and $ {\mu^1}^* $, respectively. Then using an invertibility argument we show existence and uniqueness of Nash equilibrium.
		
Using Proposition~\ref{thmconcpub}, at any Nash equilibrium $ \overline{m} $ we have: $ { \nabla_{m_i} } u_i(\overline{m}) = 0 $, $ \forall~i \in \mN $. This gives
\begin{subequations}
\begin{alignat}{2}
	\frac{\partial v_i(\widehat{x}_i(\overline{m}))}{\partial y_i} - \frac{\partial \widehat{t}_i(\overline{m})}{\partial y_i} &= 0, \quad &&\forall~i \in \mN, 
	\\
	\frac{\partial v_i(\widehat{x}_i(\overline{m}))}{\partial q_i^r} - \frac{\partial \widehat{t}_i(\overline{m})}{\partial q_i^r} &= 0, \quad &&\forall~r \in \mN,~i\in \mN.	
\end{alignat}
\end{subequations}
Using the definitions in~\eqref{eqallopub} and \eqref{eqtaxpub}, this becomes, $ \forall~i \in \mN $,
\begin{subequations} \label{eqNE1pub}
\begin{gather}\label{eqNE1pub_a}
	\frac{1}{N} \big( v_i^\prime(\widehat{x}_i(\overline{m})) - \widehat{p}_i(\overline{m}_{-i}) \big) + 2\xi(\overline{q}_i^i - \xi \overline{y}_i) + \delta\xi(\overline{q}_{n(i,i)}^i - \xi \overline{y}_i) = 0, %\quad &\forall~i \in \mN, 
	\\
	\label{eqNE1pub_b}
	\overline{q}_i^r =
	\left\{
	\begin{array}{ll}
	\xi \overline{y}_i  & \mbox{for } r = i, \\
	\xi \overline{y}_r  & \mbox{for } r \in \mN(i), \\
	\xi \overline{q}_{n(i,r)}^r & \mbox{for } r \notin  \mN(i) \text{ and } r \ne i,
	\end{array}
	\right.. %\qquad &\forall~i \in \mN. 
\end{gather}
\end{subequations}
		
For any distinct pair of vertexes $ i,r $, denote by $ \{ i, i_1,i_2,\ldots,i_{d(i,r)} = r \}$ the ordered vertexes in the shortest path between $ i $ and $ r $, where $ i_1 = n(i,r) \in \mN(i) $. Since the shortest path between $ i $ and $ r $ contains the shortest path between $ i_k $ and $ r $, for any $ k < d(i,r) $, we have $ n(i_k,r) = i_{k+1} $. Using the third sub-equation in~\eqref{eqNE1pub_b} repeatedly, replacing $ i $ by $ i_k $ gives,
\begin{equation}
	\overline{q}_i^r = \xi \, \overline{q}_{i_1}^r = \xi^2 \, \overline{q}_{i_2}^r = \cdots =  \xi^{d(i,r)-1} \, \overline{q}_{i_{d(i,r)-1}}^r. 	
\end{equation}
Now using the second sub-equation of~\eqref{eqNE1pub_b}, replacing $ i $ by $ i_{d(i,r)-1} $ and noting $ r \in \mN(i_{d(i,r)-1}) $, gives $ \overline{q}_{i_{d(i,r)-1}}^r = \xi \, \overline{y}_{r} $. This combined with the above equation gives that~\eqref{eqNE1pub_b} implies
\begin{equation}\label{eqlempub}
	\overline{q}_i^r =
	\left\{
	\begin{array}{ll}
	\xi \overline{y}_i  & \mbox{\textup{for} } r = i, \\
	\xi^{d(i,r)} \overline{y}_r  & \mbox{\textup{for} } r \ne i,	
	\end{array}
	\right. \quad \forall~i \in \mN. 
\end{equation}
		
Using the above and then combining~\eqref{eqNE1pub_a} with~\eqref{eqallopub} and~\eqref{eqtaxpub_b} gives, $ \forall~i\in \mN $, 
\begin{subequations}
\begin{align} \label{eqNE2pub_a}
	v_i^\prime(\widehat{x}_i(\overline{m})) &= \widehat{p}_i(\overline{m}_{-i}), 
	%\quad \forall~i\in \mN,
	\\ 
	\label{eqNE2pub_b}
	\widehat{x}_i(\overline{m}) &= \frac{1}{N} \sum_{j \in \mN} \overline{y}_j , 
	%\quad \forall~i\in \mN,
	\\ 
	\label{eqNE2pub_c}
	\widehat{p}_i(\overline{m}_{-i}) &= \delta(N-1) \left( \overline{y}_i - \frac{1}{N-1} \sum_{j \ne i} \overline{y}_j \right). 
	%\quad \forall~i\in \mN.
\end{align}
\end{subequations}
\eqref{eqNE2pub_b} implies $ \widehat{x}_i(\overline{m}) = \widehat{x}_r(\overline{m}) $ for any $ i,r \in \mN $ and~\eqref{eqNE2pub_c} gives $ \sum_{r \in \mN} \widehat{p}_i(\overline{m}_{-i}) = 0 $. Thus, the allocation-price pair
\begin{equation}
	\left(\frac{1}{N} \sum_{j \in \mN} \overline{y}_j \, , \, \left( \delta(N-1) \left(\overline{y}_i - \frac{1}{N-1} \sum_{j \ne i} \overline{y}_j \right)\right)_{i \in \mN} \right)
\end{equation}
satisfy the optimality conditions,~\eqref{eqKKTpub}, as $ (x^*,{\mu^1}^*) $. Since the optimality conditions are sufficient, the allocation at any Nash equilibrium $ \overline{m} $ is the efficient allocation $ x^* $.
		
For existence and uniqueness, consider the following set of linear equations that must be satisfied at any Nash equilibrium $ \overline{m} $,
\begin{subequations}
\begin{align}
	x^* &= \frac{1}{N} \sum_{j \in \mN} \overline{y}_j, 
	\\
	{\mu^1_i}^* &= \delta(N-1) \left( \overline{y}_i - \frac{1}{N-1} \sum_{j \ne i} \overline{y}_j\right), \quad \forall~i \in \mN.
\end{align}
\end{subequations}
Here $ \left(\overline{y}_j \right)_{j\in\mN} $ are the variables and $ (x^*,{\mu^1}^*) $ are fixed - since they are uniquely defined by the optimization,~\eqref{eqcppub}. The above equations can be inverted to give the unique solution as,
\begin{equation}
	\overline{y}_i = x^* + \frac{{\mu_i^1}^*}{\delta N}, \quad \forall~i\in\mN. 
\end{equation}
Furthermore, using above and~\eqref{eqlempub}, the values for $ \big( \overline{q}_i^r \big)_{i,r\in\mN} $ can also be calculated uniquely. Since a solution for $ \overline{m} = (\overline{y},\overline{q}) $ in terms of $ x^*,{\mu^1}^* $ exists, existence of Nash equilibrium is guaranteed. Also, since this solution is unique, there is a unique Nash equilibrium.
		
For Budget Balance, we have the following. 	By the characterization from above we know that at Nash Equilibrium $ \widetilde{m} $, all tax terms from~\eqref{eqtaxpub_a}, other than $ \widehat{p}_i(\widetilde{m}_{-i}) \widehat{x}_i(\widetilde{m}) $, are zero. Furthermore, the prices are equal to $ {\mu_i^1}^* $ and each allocation is equal to $ x^* $. Thus,
\begin{equation}
	\sum_{i \in \mN} \widehat{t}_i(\widetilde{m}) = \sum_{i \in \mN} {\mu_i^1}^* x^* = x^* \sum_{i \in \mN} {\mu_i^1}^* = x^* \cdot 0 = 0,
\end{equation}
since the optimal dual variables $ \big({\mu_i^1}^*\big)_{i\in\mN} $ satisfy,~\eqref{eqKKTpub_b}, $ \sum_{i \in \mN} {\mu_i^1}^* = 0 $.
\end{proof}

\section{Proof of Proposition~\ref{thmcontracpr} (Contraction - Private goods)}
\label{proofcontracpr}
	
\begin{proof}
The game is contractive if the matrix norm of the Jacobian of best-response $ \beta = \big( \beta_i \big)_{i \in \mN} = \big( \breve{y}_i,\breve{q}_i \big)_{i \in \mN} $ is smaller than unity, i.e.,  $ \Vert \nabla \beta \Vert < 1 $. We use the row-sum norm for this, and in this proof verify specifically  the following set of conditions,
\begin{subequations}
\begin{alignat}{2}
	\label{eqBRypr}
	\sum_{r \in \mN,\,r  \ne i} \left( \left\vert \frac{\partial \breve{y}_i}{\partial y_r}  \right\vert + \sum_{j\in \mN} \left\vert \frac{\partial \breve{y}_i}{\partial q_r^j}  \right\vert \right) &< 1, \quad &&\forall~i \in \mN, 
	\\
	\label{eqBRqpr}
	\sum_{r \in \mN,\,r  \ne i} \left( \left\vert \frac{\partial \breve{q}_i^w}{\partial y_r}  \right\vert + \sum_{j\in \mN} \left\vert \frac{\partial \breve{q}_i^w}{\partial q_r^j}  \right\vert \right) &< 1, \quad &&\forall~w \in \mN,~\forall~i \in \mN. 		
\end{alignat}
\end{subequations}
The summation can be performed simply over the indexes $ r \in \mN(i) $ instead of $ r \ne i $ because our defined mechanism is distributed and hence the best-response of agent $ i $ depends only on $ \big( m_j \big)_{j \in \mN(i)} $. 
		
Consider any agent $ i \in \mN $, for the best-response $ \breve{q}_i $ we have 
\begin{equation}
	\breve{q}_i^w =
	\left\{
	\begin{array}{ll}
	\xi \breve{y}_i  & \mbox{for } w = i, \\
	\xi y_w  & \mbox{for } w \in \mN(i), \\
	\xi q_{n(i,w)}^w & \mbox{for } w \notin  \mN(i) \text{ and } w \ne i.
	\end{array}
	\right.
\end{equation}
Thus, by choosing $ \xi \in (0,1) $, all conditions within~\eqref{eqBRqpr} are satisfied where $ w \ne i $. Next, we verify conditions in~\eqref{eqBRypr}. Once this is done, then in conjunction with $ \xi \in (0,1) $, the conditions from~\eqref{eqBRqpr} with $ w=i $ are also automatically verified. 
		
For the best-response $ \breve{y}_i $, we have
%\begin{subequations}
\begin{align} \label{eqBRy2pr}
\breve{y}_i =  \frac{1}{N-1} \sum_{\substack{r \notin \mN(i) \\ r \ne i} } \frac{q_{n(i,r)}^r}{\xi^{d(i,r)-1}} + \frac{1}{N-1} \sum_{r \in \mN(i)} \frac{q_{r}^r}{\xi} 
%	\\
+ \left( v_i^\prime \right)^{-1} \big( \widehat{p}_i(m_{-i}) \big).
%	\frac{1}{\delta} \left[ \frac{q_{n(i,i)}^i}{\xi} +  \sum_{\substack{r \notin \mN(i) \\ r \ne i} } \frac{q_{n(i,r)}^r}{\xi^{d(i,r)-1}} + \sum_{r \in \mN(i)} \frac{q_{r}^r}{\xi} \right] \right).
\end{align} 
%\end{subequations} 
where $ \widehat{p}_i(m_{-i}) $ is defined in~\eqref{eqtaxpr_b}. Thus, we have,
\begin{align}\label{eqBRyderpr}
	\frac{\partial \breve{y}_i}{\partial q_{n(i,r)}^r} = 
	\left\{
	\begin{array}{ll}
	\dfrac{1}{\delta \xi} \dfrac{1}{v_i^{\prime\prime}(\cdot)} & \mbox{for } r=i,\\[2ex]
	\dfrac{1}{N-1} \dfrac{1}{\xi} + \dfrac{1}{\delta \xi} \dfrac{1}{v_i^{\prime\prime}(\cdot)}  & \mbox{for } r \in \mN(i),\\[2ex]
	\dfrac{1}{N-1} \dfrac{1}{\xi^{d(i,r)-1}} + \dfrac{1}{\delta \xi^{d(i,r)-1}} \dfrac{1}{v_i^{\prime\prime}(\cdot)} & \mbox{for } r \notin  \mN(i),~ r \ne i.
	\end{array}
	\right. 
\end{align}
where in each expression above $ v_i^{\prime\prime}(\cdot) $ is evaluated at $ \widehat{p}_i(m_{-i}) $. Also, in the notation used above, for any $ r \in \mN(i) $, $ n(i,r) = r $. All other partial derivative of $ \breve{y}_i $ are zero. With all this condition in~\eqref{eqBRypr} becomes, 
\begin{equation}\label{eqBRy3pr}
	\left\vert \dfrac{1}{\delta \xi} \dfrac{1}{v_i^{\prime\prime}(\cdot)}  \right\vert
	+ \left\vert 1 + \frac{N-1}{\delta v_i^{\prime\prime}(\cdot)} \right\vert \left( \sum_{r \in \mN(i)} \frac{1}{(N-1)\xi} \right) 
	+ \left\vert 1 + \frac{N-1}{\delta v_i^{\prime\prime}(\cdot)} \right\vert \Bigg( \sum_{\substack{r \notin \mN(i) \\ r \ne i} } \frac{1}{(N-1)\xi^{d(i,r)-1}} \Bigg)   < 1.
\end{equation}
To simplify the above, we utilize the upper bound from~\eqref{eqetapr}, $ v_i^{\prime\prime}(\cdot) \in (-\eta,-\frac{1}{\eta}) $. Set 
\begin{equation} \label{eqetaC1pr}
	\eta < \frac{\delta}{N-1},
\end{equation}
so that the expressions inside absolute value operator for the second and third terms on the LHS in~\eqref{eqBRy3pr} are guaranteed to be positive. With this,~\eqref{eqBRy3pr} becomes
\begin{equation} \label{eqBRy4pr}
	\left[ \frac{-1}{v_i^{\prime\prime}(\cdot)} \right] \left( \frac{1}{\xi} - \sum_{r \in \mN(i)} \frac{1}{\xi} - \sum_{\substack{r \notin \mN(i) \\ r \ne i}} \frac{1}{\xi^{d(i,r)-1}} \right) < \frac{\delta}{N-1} \left( N-1 - \left[  \sum_{r \in \mN(i)} \frac{1}{\xi} + \sum_{\substack{r \notin \mN(i) \\ r \ne i}} \frac{1}{\xi^{d(i,r)-1}}\right]  \right)
\end{equation}
Since any agent has at least one neighbor i.e., $ \vert \mN(i) \vert \ge 1 $, the LHS above is negative. For the RHS, note that the expression inside the square brackets has exactly $ N-1 $ terms and each term is of the form $ \xi^{-k} $, for some $ k \ge 1 $. Since $ \xi < 1 $, this gives that even the RHS is negative. Utilizing the lower bound from~\eqref{eqetapr}, a sufficient condition to verify~\eqref{eqBRy4pr} is $ \eta < \frac{N-1}{\delta} \left\vert \frac{C_i}{D_i} \right\vert  $, where $ C_i,D_i $ are the expression inside the curved bracket on the LHS and RHS of~\eqref{eqBRy4pr}, respectively.
%\begin{equation}\label{eqetaC2pr}
%\eta < \frac{N-1}{\delta} \left \vert \dfrac{\dfrac{1}{\xi} - \displaystyle\sum_{r \in \mN(i)} \dfrac{1}{\xi} - \displaystyle\sum_{\substack{r \notin \mN(i) \\ r \ne i}} \dfrac{1}{\xi^{d(i,r)-1}} }{N-1 - \left[  \displaystyle\sum_{r \in \mN(i)} \dfrac{1}{\xi} + \displaystyle\sum_{\substack{r \notin \mN(i) \\ r \ne i}} \dfrac{1}{\xi^{d(i,r)-1}}\right]} \right \vert  
%\end{equation}
Combining this with the condition in~\eqref{eqetaC1pr}, a sufficient condition for verifying~\eqref{eqBRypr} is 
\begin{equation}
	\eta < \min\left(\frac{\delta}{N-1} \, , \, \frac{N-1}{\delta}\left\vert \frac{C_i}{D_i} \right\vert\right), \quad \forall~i \in \mN.
\end{equation}
Without any further tuning of parameters $ \xi,\delta $, the proof is complete as long as $ \eta $ satisfies above. However, in our model we would like to accommodate any value of $ \eta > 1 $ and this requires tuning of parameters $ \xi,\delta $. Set 
\begin{equation}
	\delta = (N-1)\sqrt{\underset{i \in \mN}{\min} \left\vert \dfrac{C_i}{D_i} \right\vert} > 0,
\end{equation}
in this to get the sufficient condition as $ \eta^2 < \underset{i \in \mN}{\min} \left\vert \dfrac{C_i}{D_i} \right\vert $, i.e.,
\begin{equation}
	\eta^2 < \underset{i \in \mN}{\min} \left\{ \left. \left \vert {\dfrac{1}{\xi} - \displaystyle\sum_{r \in \mN(i)} \dfrac{1}{\xi} - \displaystyle\sum_{\substack{r \notin \mN(i) \\ r \ne i}} \dfrac{1}{\xi^{d(i,r)-1}} } \right \vert 
	\middle/  
	\left \vert {N-1 - \left[  \displaystyle\sum_{r \in \mN(i)} \dfrac{1}{\xi} + \displaystyle\sum_{\substack{r \notin \mN(i) \\ r \ne i}} \dfrac{1}{\xi^{d(i,r)-1}}\right]} \right \vert \right. \right\}.
\end{equation}
We want to select $ \xi \in (0,1) $ such that the RHS above can be made arbitrarily large. For this, first note that,  for any $ i \in \mN $ the numerator of the RHS is bounded away from zero for $ \xi $ in the neighborhood of $ 1 $. Second, the denominator can be made arbitrarily close to $ 0 $ by choosing $ \xi $ close enough to $ 1 $. This can be seen by rewriting 
\begin{equation}
	D_i = N-1 - \left[  \sum_{r \in \mN(i)} \frac{1}{\xi} + \sum_{\substack{r \notin \mN(i) \\ r \ne i}} \frac{1}{\xi^{d(i,r)-1}}\right] = \sum_{r \in \mN(i)} \left( 1 - \frac{1}{\xi} \right)  + \sum_{\substack{r \notin \mN(i) \\ r \ne i}} \left( 1 - \frac{1}{\xi^{d(i,r)-1}} \right), 
\end{equation}
where for any given $ k \ge 1,\epsilon > 0 $, choose $ \xi \in \left( \left(\frac{1}{1+\epsilon}\right)^{\nicefrac{1}{k}},1 \right) $ to have $ \vert 1 - \xi^{-k} \vert < \epsilon $. Finally, it is clear from above that the denominator $ D_i $ can be made arbitrarily small concurrently for all $ i \in \mN $.
		
That the game is not supermodular follows from the first sub-equation of~\eqref{eqBRyderpr}, which implies that the best-response $ \breve{y}_i $ is decreasing w.r.t. $ q_{n(i,i)}^i $. Also, convergence of every learning dynamic within the ABR class is guaranteed by Fact~\ref{factabr}.
\end{proof}

\section{Proof of Proposition~\ref{thmcontracpub} (Contraction - Public goods)}
\label{proofcontracpub}
	
\begin{proof}
The game is contractive if the matrix norm of the Jacobian of best-response $ \beta = \big( \beta_i \big)_{i \in \mN} = \big( \breve{y}_i,\breve{q}_i \big)_{i \in \mN} $ is smaller than unity, i.e.,  $ \Vert \nabla \beta \Vert < 1 $. We use the row-sum norm for this, and in this proof verify specifically  the following set of conditions,
\begin{subequations}
\begin{alignat}{2}
	\label{eqBRypub}
	\sum_{r \in \mN,\, r  \ne i} \left( \left\vert \frac{\partial \breve{y}_i}{\partial y_r}  \right\vert + \sum_{j\in \mN} \left\vert \frac{\partial \breve{y}_i}{\partial q_r^j}  \right\vert \right) &< 1, \quad &&\forall~i \in \mN, 
	\\
	\label{eqBRqpub}
	\sum_{r \in \mN,\, r  \ne i} \left( \left\vert \frac{\partial \breve{q}_i^w}{\partial y_r}  \right\vert + \sum_{j\in \mN} \left\vert \frac{\partial \breve{q}_i^w}{\partial q_r^j}  \right\vert \right) &< 1, \quad &&\forall~w \in \mN,~\forall~i \in \mN. 		
\end{alignat}
\end{subequations}
The summation can be performed simply over the indexes $ r \in \mN(i) $ instead of $ r \ne i $ because our defined mechanism is distributed and hence the best-response of agent $ i $ depends only on $ \big( m_j \big)_{j \in \mN(i)} $. 
		
Consider any agent $ i \in \mN $, for the best-response $ \breve{q}_i $ we have 
\begin{equation}
	\breve{q}_i^w =
	\left\{
	\begin{array}{ll}
	\xi \breve{y}_i  & \mbox{for } w = i, \\
	\xi y_w  & \mbox{for } w \in \mN(i), \\
	\xi q_{n(i,w)}^w & \mbox{for } w \notin  \mN(i) \text{ and } w \ne i.
	\end{array}
	\right.
\end{equation}
Thus, by choosing $ \xi \in (0,1) $, all conditions within~\eqref{eqBRqpub} are satisfied where $ w \ne i $. Next, we verify conditions in~\eqref{eqBRypub}. Once this is done, then in conjunction with $ \xi \in (0,1) $, the conditions from~\eqref{eqBRqpub} with $ w=i $ are also automatically verified. 
		
For the best-response $ \breve{y}_i $, we have the following relation
\begin{subequations}
\begin{align}
	\frac{1}{N} \big( v_i^\prime(\widehat{x}_i(m)) - \widehat{p}_i(m_{-i}) \big) + 2\xi(\breve{q}_i^i - \xi \breve{y}_i) + \delta\xi(q_{n(i,i)}^i - \xi \breve{y}_i) &= 0, 
	\\
	\Rightarrow~~ 	\frac{1}{N} \big( v_i^\prime(\widehat{x}_i(m)) - \widehat{p}_i(m_{-i}) \big) + \delta\xi(q_{n(i,i)}^i - \xi \breve{y}_i) &= 0. 
\end{align}
\end{subequations}
In the above relation, $ \widehat{x}_i(m) $ is evaluated at $ \breve{y}_i $ instead of $ y_i $. Also, this relation implicitly defines $ \breve{y}_i $. Differentiating this equation w.r.t. $ \big( q_{n(i,r)}^r \big)_{r \in \mN} $ gives
\begin{subequations}
\begin{alignat}{2} 
	\dfrac{v_i^{\prime\prime}(\widehat{x}_i(m))}{N^2} \dfrac{\partial \breve{y}_i}{\partial q_{n(i,i)}^r} - \dfrac{\delta(N-1)}{N\xi} + \delta\xi\left( 1 - \xi \dfrac{\partial \breve{y}_i}{\partial q_{n(i,i)}^r} \right) &= 0, \quad  ~ &&r=i, 
	\\
	\dfrac{v_i^{\prime\prime}(\widehat{x}_i(m))}{N^2} \left(\dfrac{\partial \breve{y}_i}{\partial q_{n(i,r)}^r} + \frac{1}{\xi} \right) + \dfrac{\delta}{N\xi} - \delta\xi^2\dfrac{\partial \breve{y}_i}{\partial q_{n(i,r)}^r} &= 0, \quad   \forall~&&r \in \mN(i),
	\\
	\dfrac{v_i^{\prime\prime}(\widehat{x}_i(m))}{N^2} \left(\dfrac{\partial \breve{y}_i}{\partial q_{n(i,r)}^r} + \frac{1}{\xi^{d(i,r)-1}} \right) + \dfrac{\delta}{N\xi^{d(i,r)-1}} - \delta\xi^2\dfrac{\partial \breve{y}_i}{\partial q_{n(i,r)}^r} &= 0, \quad  \forall~&&r \notin  \mN(i),~ r \ne i,
\end{alignat} 
\end{subequations}
which implies
\begin{align} \label{eqBRy2pub}
	\frac{\partial \breve{y}_i}{\partial q_{n(i,r)}^r} = \frac{1}{\dfrac{v_i^{\prime\prime}(\widehat{x}_i(m))}{N^2} - \delta\xi^2} \times 
	\left\{
	\begin{array}{ll}
	\dfrac{\delta(N-1)}{N\xi} - \delta\xi & \mbox{for } r=i, \\[2ex]
	-\dfrac{\delta}{N\xi} - \dfrac{v_i^{\prime\prime}(\widehat{x}_i(m))}{N^2 \xi} & \mbox{for } r \in \mN(i),\\[2ex]
	-\dfrac{\delta}{N\xi^{d(i,r)-1}} - \dfrac{v_i^{\prime\prime}(\widehat{x}_i(m))}{N^2 \xi^{d(i,r)-1}} & \mbox{for } r \notin  \mN(i),~ r \ne i.
	\end{array}
	\right. 
\end{align} 
In the notation used above, for any $ r \in \mN(i) $, $ n(i,r) = r $. All other partial derivative of $ \breve{y}_i $ are zero. With all this condition in~\eqref{eqBRypub} becomes, 
\begin{gather}
	\nonumber 
	\left\vert \frac{\delta(N-1)}{N\xi} - \delta\xi  \right\vert
	+ \left\vert -\dfrac{\delta}{N} - \dfrac{v_i^{\prime\prime}(\widehat{x}_i(m))}{N^2} \right\vert \left( \sum_{r \in \mN(i)} \frac{1}{\xi} \right) 
	+ \left\vert -\dfrac{\delta}{N} - \dfrac{v_i^{\prime\prime}(\widehat{x}_i(m))}{N^2} \right\vert \Bigg(  \sum_{\substack{r \notin \mN(i) \\ r \ne i} } \frac{1}{\xi^{d(i,r)-1}} \Bigg)
	\\ \label{eqBRy3pub}
	< \delta\xi^2 - \frac{v_i^{\prime\prime}(\widehat{x}_i(m))}{N^2}.
\end{gather}
We impose the condition
\begin{equation}\label{eqxipub}
	\xi \in \left( \sqrt{\frac{N-1}{N}} , 1 \right)
\end{equation}
so that the expression inside the first absolute value term in above is negative. To simplify the other expressions containing absolute value, we utilize the lower bound from~\eqref{eqetapr}, $ v_i^{\prime\prime}(\cdot) \in (-\eta,-\frac{1}{\eta}) $. Set 
\begin{equation} \label{eqetaC1pub}
	\eta < N\delta,
\end{equation}
so that the remaining expressions inside absolute value in~\eqref{eqBRy3pub} are guaranteed to be negative. With this,~\eqref{eqBRy3pub} becomes
\begin{multline} \label{eqBRy4pub} 
	\Big[ - v_i^{\prime\prime}(\widehat{x}_i(m))\Big] \left( 1 + \sum_{r \in \mN(i)} \frac{1}{\xi} + \sum_{\substack{r \notin \mN(i) \\ r \ne i}} \frac{1}{\xi^{d(i,r)-1}} \right) 
	\\ 
	>
	N\delta \left( \frac{1}{\xi} \left[ \sum_{\substack{r \notin \mN(i) \\ r \ne i}} \frac{1}{\xi^{d(i,r)-2}} - (N - \vert \mN(i) \vert  - 1)\right]  + N \xi (1-\xi)  \right),
\end{multline}
where $ N - \vert \mN(i) \vert  - 1  $ is the number of agents in the system except agent $ i $ and all his/her neighbors in $ \mN(i) $. Clearly the LHS above is positive. For any $ r \in \mN(i) $ and $ r \ne i $, we have $ d(i,r) \ge 2 $. On the RHS, inside the square brackets there are exactly $ N - \vert \mN(i) \vert -1 $ terms in the summation and each term is of the form $ \xi^{-k} $, for some $ k \ge 0 $. Since $ \xi < 1 $, this gives that even the RHS is positive. Utilizing the upper bound from~\eqref{eqetapr}, a sufficient condition to verify~\eqref{eqBRy4pub} is $ \eta < \dfrac{1}{N\delta} \dfrac{C_i}{D_i}  $, where $ C_i,D_i $ are the expression inside the curved bracket on the LHS and RHS of~\eqref{eqBRy4pub}, respectively. Combining this with the condition in~\eqref{eqetaC1pub}, a sufficient condition for verifying~\eqref{eqBRypub} is 
\begin{equation}
	\eta < \min\left(N\delta \, , \, \frac{1}{N\delta} \frac{C_i}{D_i} \right), \quad \forall~i\in \mN.
\end{equation}
Without any further tuning of parameters $ \xi,\delta $, the proof is complete as long as $ \eta $ satisfies above. However, in our model we would like to accommodate any value of $ \eta > 1 $ and this requires tuning of parameters $ \xi,\delta $. Set 
\begin{equation}
	\delta = \frac{1}{N}\sqrt{\underset{i\in\mN}{\min}\left(\frac{C_i}{D_i}\right) } > 0,
\end{equation} 
in this to get the sufficient condition as $ \eta^2 < \underset{i\in\mN}{\min} \left(\dfrac{C_i}{D_i}\right) $, i.e.,
\begin{multline}
	\eta^2 < \underset{i\in\mN}{\min} \Bigg\{ 
	\left( 1 + \sum_{r \in \mN(i)} \frac{1}{\xi} + \sum_{\substack{r \notin \mN(i) \\ r \ne i}} \frac{1}{\xi^{d(i,r)-1}} \right) 
	\\
	\left.
	\middle/  
	\left( \frac{1}{\xi} \left[ \sum_{\substack{r \notin \mN(i) \\ r \ne i}} \frac{1}{\xi^{d(i,r)-2}} - (N - \vert \mN(i) \vert  - 1)\right]  + N \xi (1-\xi) \right) 
	\right. \Bigg \}.
\end{multline}
We want to select $ \xi $  such that the RHS above can be made arbitrarily large, whilst satisfying~\eqref{eqxipub}. For this, firstly note that, for any $ i \in \mN $ the numerator of the RHS is greater than $ 1 $, hence it is bounded away from zero. Secondly, the denominator can be made arbitrarily close to $ 0 $ by choosing $ \xi $ close enough to $ 1 $. This can be seen by rewriting
\begin{subequations}
\begin{gather}
	\sum_{\substack{r \notin \mN(i) \\ r \ne i}} \frac{1}{\xi^{d(i,r)-2}} - (N - \vert \mN(i) \vert  - 1) 
	= 
	\sum_{\substack{r \notin \mN(i) \\ r \ne i}} \left( \frac{1}{\xi^{d(i,r)-2}} - 1 \right) 
	=
	\sum_{\substack{r \in \mN \\ d(i,r) \ge 3}} \left( \frac{1}{\xi^{d(i,r)-2}} - 1 \right),
	\\
	\Rightarrow~~D_i = \frac{1}{\xi} \sum_{\substack{r \in \mN \\ d(i,r) \ge 3}} \left( \frac{1}{\xi^{d(i,r)-2}} - 1 \right)  + N\xi(1-\xi),
\end{gather}
\end{subequations}
where for any given $ k \ge 1,\epsilon > 0 $, choose $ \xi \in \left( \left(\frac{1}{1+\epsilon}\right)^{\nicefrac{1}{k}},1 \right) $ to have $ \left( \xi^{-k} - 1 \right) < \epsilon $. Note that this is consistent with~\eqref{eqxipub}. The remaining term $ N\xi(1-\xi) $ can also made made arbitrarily small by choosing $ \xi $ close enough to $ 1 $. Finally, it is clear from above that the denominator $ D_i $ can made arbitrarily small concurrently for all $ i \in \mN $.
		
The fact that the game is supermodular follows from the preceding analysis, where the parameters are chosen such that each expression in~\eqref{eqBRy2pub} is positive. Which implies that the best-response is increasing in every message of other agents.
\end{proof}

\end{document}